\documentclass{article}

\usepackage{arxiv}

\usepackage[utf8]{inputenc} 
\usepackage[T1]{fontenc}    
\usepackage{hyperref}       
\usepackage{url}            
\usepackage{booktabs}       
\usepackage{amsfonts}       
\usepackage{nicefrac}       
\usepackage{microtype}      
\usepackage{lipsum}		
\usepackage{graphicx}
\usepackage{doi}

\usepackage{todonotes}
\usepackage{dsfont}
\usepackage{mathtools}
\usepackage{amsmath}

\DeclareMathOperator{\cost}{Cost}

\DeclareMathOperator{\spl}{Split}
\DeclareMathOperator{\merge}{Merge}
\DeclareMathOperator{\move}{Move}

\usepackage{amsthm}

\newtheorem{theorem}{Theorem}[section]
\newtheorem{lemma}[theorem]{Lemma}

\title{Exact and Heuristic Approaches to Speeding Up the MSM Time Series Distance Computation}


\author{ \href{https://orcid.org/0000-0003-4463-1149}{\includegraphics[scale=0.06]{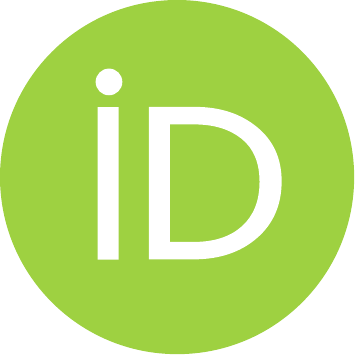}\hspace{1mm}Jana Holznigenkemper}\\
	Department of Mathematics and Computer Science\\
	University of Marburg\\
	Germany \\
	\texttt{holznigenkemper@mathematik.uni-marburg.de} \\
	\And
	\href{https://orcid.org/0000-0003-0829-7032}{\includegraphics[scale=0.06]{orcid.pdf}\hspace{1mm}Christian Komusiewicz} \\
	Department of Mathematics and Computer Science\\
	University of Marburg\\
	Germany \\
	\texttt{komusiewicz@mathematik.uni-marburg.de} \\
    \And
	\href{https://orcid.org/0000-0002-9362-153X}{\includegraphics[scale=0.06]{orcid.pdf}\hspace{1mm}Bernhard Seeger} \\
	Department of Mathematics and Computer Science\\
	University of Marburg\\
	Germany \\
	\texttt{seeger@mathematik.uni-marburg.de} \\
}



\hypersetup{
pdftitle={Exact and Heuristic Approaches to Speeding Up the MSM Time Series Distance Computation},
pdfsubject={q-bio.NC, q-bio.QM},
pdfauthor={Jana Holznigenkemper, Christian Komusiewicz, Bernhard Seeger},
pdfkeywords={Time Series Distance Computation, Dynamic Programming, Time Series Metric, Heuristics},
}

\begin{document}
\maketitle

\begin{abstract}
The computation of the distance of two time series is time-consuming for any elastic distance function that accounts for misalignments. Among those functions, DTW is the most prominent. However, a recent extensive evaluation has shown that the move-split merge (MSM) metric is superior to DTW regarding the analytical accuracy of the 1-NN classifier. Unfortunately, the running time of the standard dynamic programming algorithm for MSM distance computation is~$\Omega(n^2)$, where $n$ is the length of the longest time series. In this paper, we provide approaches to reducing the cost of MSM distance computations by using lower and upper bounds for early pruning paths in the underlying dynamic programming table. For the case of one time series being a constant, we present a linear-time algorithm. In addition, we propose new linear-time heuristics and adapt heuristics known from DTW to computing the MSM distance. One heuristic employs the metric property of MSM and the previously introduced linear-time algorithm. Our experimental studies demonstrate substantial speed-ups in our approaches compared to previous MSM algorithms. In particular, the running time for MSM is faster than a state-of-the-art DTW distance computation for a majority of the popular UCR data sets.
\end{abstract}

\keywords{Time Series Distance Computation \and Dynamic Programming \and Time Series Metric \and Heuristics}

\section{Introduction}
Measuring the distance between two time series is a crucial step in time series analysis tasks such as classification~\cite{AbandaML19}.
An important class of time series distance measures are elastic measures like Dynamic Time Warping (\textsc{DTW})\cite{BerndtC94}, the Move-Split-Merge (\textsc{MSM}) metric~\cite{StefanAD13}, Longest Common Subsequence (\textsc{LCS}) based distance~\cite{VlachosGK02}, and the Edit Distance with Penalty (\textsc{EDP})\cite{ChenN04}.
All these measures have in common that their computation relies on dynamic programming. 
Among these distance measures, \textsc{DTW} is the most frequently used. 
In a recent study, Paparrizos et~al.~\cite{PaparrizosLEF20} re-examined 71 time series distance functions for 1-NN classification task and found \textsc{MSM} to lead to high-accuracy results in classification tasks.
In addition to this, MSM offers one significant further advantage:
it satisfies the properties of a mathematical metric, which is advantageous for example in database indexing. 
Paparrizos et.~al~\cite{PaparrizosLEF20} however also identified a key drawback of the \textsc{MSM} distance: The state-of-the-art \textsc{MSM} algorithm with quadratic complexity was shown to be significantly slower than \textsc{DTW} (with band heuristic)~\cite{PaparrizosLEF20}. 
In this paper, we aim to remedy this drawback. To this end, we examine several strategies to  speed up the computation time, based on the computation of heuristic lower and upper bounds and their use in an exact algorithm. 
More precisely, our contributions are: 
\begin{itemize}
    \item We propose a linear-time algorithm computing the exact distance between a constant time series and an arbitrary time series. 
    \item We adapt common heuristic strategies like the Sakoe-Chiba band and the Itakura parallelogram for \textsc{MSM} to reduce the amount of entries to be computed in the dynamic programming table.
    \item We develop new \textsc{MSM}-specific heuristics making use of the transformation structure and the triangle inequality. 
    \item We speed up the exact \textsc{MSM} distance computation by introducing \textit{PrunedMSM}, an exact algorithm employing pruning strategies using several lower and upper bounds. 
    \item In experiments on samples of real-world time series, we show a substantial running time advantage of \textit{PrunedMSM} over the classic \textsc{MSM} implementation and analyze the trade-off between running time and accuracy for all given heuristics. \item We compare the fastest variant of \textit{PrunedMSM} to a state-of-the-art \textsc{DTW} distance computation showing that \textit{PrunedMSM} is faster than \textit{PrunedDTW} in most cases. 
\end{itemize}
The remainder of the paper is structured as follows. Section~\ref{sec:RelatedWork} reviews related work. In Section~\ref{chap:preliminaries}, we recall the definition of the MSM metric.
Then, in Section~\ref{sec:Speedup}, we discuss first simple speed-ups to improve the running time of the classic \textsc{MSM} implementation and give an exact linear-time algorithm for the computation of the distance between an arbitrary and a constant time series. 
Various heuristic strategies are given in Section~\ref{sec:Heuristics}.
The new exact \textit{PrunedMSM} algorithm is presented in Section~\ref{sec:pruningStrategies}, including several lower and upper bounding strategies. 
We experimentally discuss all given heuristics, evaluate \textit{PrunedMSM}, compare the fastest exact \textit{PrunedMSM} to \textit{PrunedDTW} in Section~\ref{sec:experiments}, and conclude in Section \ref{sec:Conclusion}. 

\section{Related Work}
\label{sec:RelatedWork}

Measuring the distance between two time series is crucial for many data mining applications like clustering~\cite{aghabozorgi2015time}, classification~\cite{AbandaML19}, or motif discovery~\cite{torkamani2017survey}. 
The most common and simple distance linear-time measure is the Euclidean distance (ED).
ED was shown to be inferior in classification tasks since it is sensitive to distortion of the time axis~\cite{YiJF98}. Furthermore, ED cannot measure the distance of time series of different lengths. 
Another class of time series distances offer elastic measures like \textsc{DTW}~\cite{BerndtC94}, \textsc{MSM}~\cite{StefanAD13}, \textsc{LCS}~\cite{VlachosGK02}, and \textsc{EDP}~\cite{ChenN04}.
The computation of all these measures relies on dynamic programming and a two-dimensional dynamic programming table with quadratic complexity.
To reduce the running time, heuristic strategies are used to decrease the running time. 
A common technique is to reduce the number of entries to be computed in the dynamic programming table by introducing a global constraint. For example, the Sakoe-Chiba band~\cite{sakoe1978dynamic} and the Itakura parallelogram~\cite{itakura1975minimum} are two common approaches for limiting the space to the left and to the right of the (slanted) diagonal of the table. 
The Sakoe-Chiba band has been evaluated for classification tasks, e.g., for \textsc{DTW}, \textsc{LCS}~\cite{VlachosGK02}, and \textsc{EDP}~\cite{kurbalija2014influence}.
The results show that the constrained versions qualitatively differ from their unconstrained ones regarding the classification error rates. 
Further studies that analyze the application of the Sakoe-Chiba band to \textsc{DTW} have shown that the classification of DTW with band is more accurate than the one without~\cite{RatanamahatanaK05,geler2019dynamic}. 
A comparison between the Sakoe-Chiba band for different band sizes and the Itakura parallelogram show a higher accuracy for the Sakoe-Chiba band~\cite{geler2019dynamic}. We are not aware of any studies on how the Itakura parallelogram performs with different scales. Moreover, we are not aware that these two heuristics have been used for the \textsc{MSM} metric. 

There are other heuristics for \textsc{DTW}, like~FastDTW~\cite{salvador2007toward}, LuckyDTW~\cite{SpiegelJA14}, and AnytimeDTW~\cite{ZhuBRK12}.
FastDTW finds a warping path for low resolutions of the time series. The warping path is then projected to a higher resolution until the final resolution is reached. 
LuckyDTW greedily determines a warping path. 
AnytimeDTW is an anytime algorithm where you can get the best-so-far result during the computation. 
None of these approaches give an accurate approximation factor of how far the solution differs from the exact distance.
Silva and Batista~\cite{SilvaB16} developed the exact \textsc{DTW} distance computation \textsc{PrunedDTW} with better running time. The idea is to prune the table entries that are guaranteed not to be part of the warping path. 
\textsc{PrunedDTW} is used in the UCR suite \cite{SilvaGKB18} to further accelerate the running time for time series similarity search. 

We are not aware of any work that considers pruning techniques for \textsc{MSM} computation.

\section{Preliminaries}
\label{chap:preliminaries}

For~$k\in\mathds{N}$, let $[k]:= \{1,\ldots ,k\}$. 
A \emph{time series} of length~$m$ is a sequence~$x=(x_1, \ldots , x_m)$, where each \emph{data point}, in short \emph{point},  $x_i$ is a real number.

\subsection{Move-Split-Merge Operations}
We now define the \textsc{MSM} metric, following the notation of Stefan et al.~\cite{StefanAD13}. 
The \textsc{MSM} metric allows three transformation operations to transfer one time series into another: 
\emph{move, split}, and \emph{merge} operations. 
For a time series $x=(x_1, \ldots , x_m)$ a move transforms a point~$x_i$ into $x_i + w$ for some~$w\in\mathds{R}$. More precisely,  $\move_{i,w}(x)~:=~(x_1, \ldots , x_{i-1},x_i~+~w, x_{i+1}, \ldots ,x_m) $, with cost $\cost(\move_{i,w}) = \vert w \vert$.
We say that there is a \emph{move at point $x_i$ to another point}~$x_i + w$. The split operation splits some  element of $x$ into two consecutive points. 
Formally, a split at point $x_i$ is defined as $\spl_i(x):= (x_1, \ldots , x_{i-1}, x_i, x_i, x_{i+1}, \ldots, x_m).$ 
A merge operation may be applied to two consecutive points of equal value. 
For $x_i = x_{i+1}$, it is given by $\merge_i(x)~\coloneqq~(x_1, \ldots , x_{i-1}, x_{i+1}, \ldots, x_m).$
We say that~$x_i$ and $x_{i+1}$ \emph{merge}.
Split and merge operations are inverse operations with equal cost that is determined by a given nonnegative constant $c  = \cost(\spl_{i}) =  \cost(\merge_{i})$. 
A \emph{sequence of transformation operations} is a tuple  $\mathds{S} = (S_1, \ldots , S_s)$, where $S_j \in \{\move_{i_j,w_j}, \spl_{i_j}, \merge_{i_j}\}$.
A \emph{transformation} $T(x, \mathds{S})$ of a time series $x$ by a sequence of transformation operations $\mathds{S}$ is defined as $T(x, \mathds{S}) \coloneqq T(S_1(x), (S_2, \ldots , S_s))$.
If $\mathds{S}$ is empty, we define $T(x, \emptyset) \coloneqq x$. The cost of a sequence of transformation operations $\mathds{S}$ is the sum of all individual operations cost, that is, $ \cost(\mathds{S}) \coloneqq \sum_{S \in \mathds{S}}\cost(S).$
We say that $\mathds{S}$ \emph{transforms~$x$ to~$y$} if $T(x,\mathds{S}) = y$.
A transformation is \emph{optimal} if it has minimal cost transforming $x$ to $y$.
The \emph{\textsc{MSM} distance} $d(x,y)$ between two time series~$x$ and~$y$ is the cost of an optimal transformation.

\subsection{Transformation Graphs}
We briefly introduce the concept of \emph{transformation graphs} to describe the structure of a transformation $T(x,\mathds{S}) = y$. For
more detailed information see Appendix~\ref{app:propertiesTrafoGraphs} and the work of Holznigenkemper et al.~\cite{holznigenkemper2023computing} and Stefan et al.~\cite{StefanAD13}. The transformation $T(x,\mathds{S}) = y$ can be described by a directed acyclic graph $G_{\mathds{S}}(x,y)$, the \emph{transformation graph}.
The edges represent the transformation operations of $\mathds{S}$. 
To create a transformation graph, for each operation in $\mathds{S}$, a \emph{move edge}, or two \emph{split} or \emph{merge edges} are added to the graph. 
An example is depicted in Figure~\ref{fig:example_transformation_graph}. 
A \emph{transformation path} in~$G_{\mathds{S}}(x,y)$ is a directed path from a source node~$x_i$ to a sink node $y_j$,
we say that~$x_i$ is \emph{aligned} to~$y_j$. 

\begin{figure}[t]
    \centering
    \includegraphics[width = 0.2\textwidth]{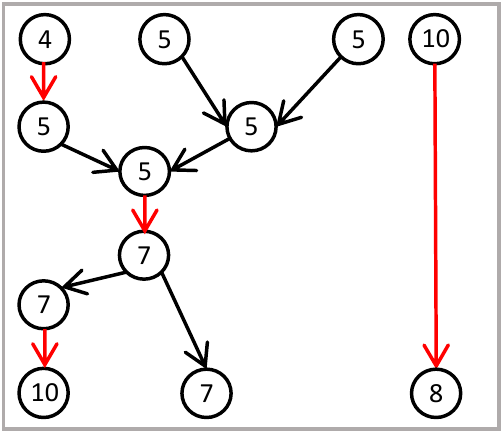}
    \caption{Optimal transformation graph of $x=(4,5,5,10)$ to $y=(10,7,8)$ for $c=0.1$. Move edges are red. The cost of a move edge is the difference between the source and the target point. We have total cost merge and split cost $3c$ and move cost of 8. Hence, the distance between $x$ and $y$ is $d(x,y) = 8.3$.}
    \label{fig:example_transformation_graph}
\end{figure}

\section{Speeding up Exact Distance Computations}
\label{sec:Speedup}
We first sketch the original dynamic programming algorithm of the \textsc{MSM} distance and  
 discuss common techniques for speeding up this algorithm. Second, we give a linear-time algorithm for the exact distance computation of an arbitrary and a constant time series. 

\subsection{Speeding up Classic \textsc{MSM}}
Stefan et. al~\cite{StefanAD13} give the following dynamic programming algorithm for computing the \textsc{MSM} metric on two input time series $x=(x_1,\ldots, x_m)$ and $y=(y_1,\ldots,  y_n)$. 
The algorithm fills a two-dimensional table~$D$ where
an entry~$D[i,j]$ represents the cost of transforming the partial time series $(x_1, \ldots, x_i)$ to the partial time series $(y_1,\ldots, y_j)$. The distance $d(x,y)$ is given by $D[m,n]$.
The recursive formulation of the \textsc{MSM} metric returns the minimum of the cost for the three transformation operations.
$$
D[i,j] = \min\{A_{MO}[i,j], A_{M}[i,j], A_{SP}[i,j] \}, \text{where} 
$$
\begin{align*}
A_{MO}[i,j] &= D[i-1,j-1] + \vert x_i -y_i \vert &(move) \\
A_{M}[i,j] &= D[i-1,j] + C(x_i, x_{i-1}, y_j) &(merge) \\
A_{SP}[i,j] &= D[i,j-1] + C(y_j, x_i, y_{j-1}) &(split) 
\end{align*}
$$
C(x_i,x_{i-1},y_j) = \begin{cases} 
 c, \text{if } x_{i-1} \leq x_i \leq y_j \text{ or } x_{i-1}\geq x_i \geq y_j&\\
 c + \min(\vert x_i-x_{i-1}\vert,\vert x_i-y_j\vert),  \text{   else.}&
\end{cases}
$$
Note that there are special cases for computing the first column and first row of $D^*$. Then, it holds that $D[i,1] = D[i-1,1] + C(x_i, x_{i-1}, y_1) $ (only merge operation may be further applied) and $D[1,j] =D[1,j-1] + C(y_j, x_1, y_{j-1})$ (only split operation may be further applied). 
For the base case, only a move operation is allowed, that is, $D[1,1] = \vert x_1 -y_1\vert$. 
 
To simplify the computation rules in our implementation, we add an additional row and column to the table with $D[0,0]=0$ and $D[i,0]= D[0,j]= \infty$. The computation starts at $D[1,1]$ without the need of treatments of special cases anymore. Second, we adopt a common strategy  to speed up the running time via optimizing the storage usage. Rather than using space for a two-dimensional array of size ($(m+1)\times (n+1)$), we only use a one-dimensional array of size $(n+1)$ and compute the table row by row assuming that $n\leq m$. This also reduces the computational overhead due to a better cache locality.  
These speed-up techniques are applied to all further approaches where a dynamic programming table is used.

\subsection{\textsc{cMSM}}
The following lemma states that the distance between a time series $x$ and a constant time series can be computed in linear time. 
\begin{lemma}
Given a time series $x=(x_1,\ldots,x_m)$ and a constant time series $q^{(m)}=(q_1,\ldots, q_m)$ with $q_i = q$ $\forall i\in [m]$, the distance $d(x,q^{(m)})$ can be computed in linear time. 
\end{lemma}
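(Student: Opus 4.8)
The plan is to specialize the dynamic program to a constant target and exploit the resulting column-independence. First I would observe that when every $y_j = q$, the split cost collapses: in $A_{SP}[i,j] = D[i,j-1] + C(q, x_i, q)$ the guard of the cheap case, $x_i \le q \le q$ or $x_i \ge q \ge q$, is satisfied for every $x_i$, so a split always costs exactly $c$. Moreover the move increment $|x_i - q|$ and the merge increment $C(x_i, x_{i-1}, q)$ no longer depend on the column index $j$. Writing $a_i := |x_i - q|$ and $b_i := C(x_i, x_{i-1}, q)$, the recurrence becomes $D[i,j] = \min\{D[i-1,j-1] + a_i,\ D[i-1,j] + b_i,\ D[i,j-1] + c\}$, where $a_i$, $b_i$, $c$ are constants for the whole of row $i$. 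This already rules out the naive guess that the answer is simply $\sum_i |x_i - q|$, since a short check (e.g. $x=(6,7,8)$, $q=6$) shows merges can strictly lower the cost; the algorithm must capture this saving.

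The key step is then to prove that, for each fixed $i$, the map $j \mapsto D[i,j]$ is a convex piecewise-linear function with only a constant number of breakpoints and all slopes in $\{-c, 0, c\}$; concretely, a V-shape determined by its minimum value, the minimizing index, and two arms of slope $c$. This holds for the first row, where $D[1,j] = |x_1 - q| + (j-1)c$, and I would carry it to row $i$ by induction. Building row $i$ amounts to (i) lifting the previous row by $b_i$, (ii) shifting it right by one and lifting it by $a_i$, (iii) taking the pointwise minimum of these two, and (iv) applying the left-to-right ramp closure $D[i,j] \leftarrow \min(D[i,j],\, D[i,j-1] + c)$ forced by splits. I would show each of these four operations preserves the compact convex form and updates the $O(1)$-size summary of the row in constant time. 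Reading off $D[m,m]$ from the final summary then gives the distance, and since each of the $m$ rows is processed in $O(1)$ time, the total running time is $O(m)$.

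The main obstacle is the structural invariant in step (iii): a pointwise minimum of two convex functions need not be convex, so new downward kinks could in principle appear and the representation could grow. Resolving this requires using the specific interplay of $a_i$, $b_i$, and $c$ — in particular that the cheap merge case gives $b_i = c$, matching the slope of the split ramp, and that in the expensive case the extra term $\min(|x_i - x_{i-1}|,\, |x_i - q|)$ is controlled by $a_i$ — to argue that the ramp closure in step (iv) always reconvexifies the row. I expect most of the proof effort to go into this invariant. A cleaner but less direct alternative would be to characterize an optimal transformation combinatorially: argue that no optimal transformation merges two points lying on opposite sides of $q$, decompose $x$ at the indices where it crosses level $q$, and evaluate each maximal one-sided run by a single linear scan whose collapse-versus-move decisions are governed by a threshold in $c$; here the obstacle shifts to justifying that cross-$q$ merges are never beneficial.
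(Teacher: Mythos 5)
Your specialization of the recurrence is correct: with $y_j \equiv q$ a split always costs exactly $c$ (the guard of the cheap case of $C(q,x_i,q)$ is always satisfied), and the move and merge increments $a_i=|x_i-q|$, $b_i=C(x_i,x_{i-1},q)$ are constant along row $i$. The genuine gap is that the structural invariant your whole plan rests on is false, not merely hard to prove: the row functions $j \mapsto D[i,j]$ are not V-shapes with slopes in $\{-c,0,c\}$, and they admit no $O(1)$-size summary at all. Concretely, take $q=0$, $c=1$, $x=(10,\,1.8,\,1.6)$. Then row $2$ is $(11,\,11.8,\,12.8,\,13.8,\ldots)$, whose first slope is $0.8\notin\{-1,0,1\}$, and row $3$ is $(12,\,12.6,\,13.4,\,14.4,\ldots)$, with three distinct slopes $0.6,0.8,1$. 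The reason is visible in the closed form $D[i,j] = jc + \min_{t \le \min(i,j)} F_i(t)$, where $F_i(t)$ is the cheapest cost of partitioning $(x_1,\ldots,x_i)$ into $t$ consecutive merge-groups plus the fee $(i-2t)c$: each additional group trades $2c$ of merge/split fees against a move-cost increase, and these trade-offs take many distinct values. Extending the example to $x=(10,\,2-\epsilon_1,\ldots,2-\epsilon_{m-1})$ with $0<\epsilon_1<\cdots<\epsilon_{m-1}$, row $m$ has the $m-1$ distinct slopes $x_m-c,\,x_{m-1}-c,\ldots,x_2-c$ followed by $c$. The rows in these examples are still convex, so your worry about step (iii) destroying convexity is not the killer; the killer is that the exact row description grows linearly, so maintaining it costs $\Theta(i)$ per row and you are back to quadratic time.

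Your fallback sketch is the viable route, and it is essentially the paper's proof: the paper decomposes an optimal transformation graph into trees and proves (its decomposition lemma in the appendix, with thresholds $t^{\pm}=q\pm 2c$) that one may assume every non-singleton tree merges only consecutive points that all lie at distance at least $2c$ from $q$, on the same side. Note that the correct threshold is $q \pm 2c$, not the level $q$ you propose: two consecutive points on the same side of $q$ but within $2c$ of it should be moved, not merged, so cutting $x$ only at its crossings of level $q$ does not by itself reduce a one-sided run to a trivial scan --- the merge-versus-move decision inside a run still needs the $2c$ band. Once the lemma is in place, the linear scan is: if $|x_i-q|\ge 2c$ and $|x_{i+1}-q|\ge 2c$ (same side), then $D_c[i]=D_c[i+1]+2c+\max\bigl(0,\,|x_i-q|-|x_{i+1}-q|\bigr)$, otherwise $D_c[i]=D_c[i+1]+|x_i-q|$. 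Proving that exactly this threshold rule is optimal (never merge across the band $[q-2c,\,q+2c]$; always merge consecutive points outside it) is the entire content of the paper's lemma, and it is where your remaining effort would have to go.
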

The \textsc{cMSM} algorithm is based on a decomposition of the transformation graph $G_{\mathds{S}}(x,q^{(m)})$. 
A formal proof of its correctness can be found in Appendix~\ref{app:cMSM}.
In the following, we just give the rules for the dynamic program computing $d(x,q^{(m)})$. 
A one-dimensional table $D_c$ is filled in reverse order where an entry $D_c[i]$ represents the cost transforming the partial time series $(x_i,\ldots, x_m)$ to $(q_i,\ldots,q_m)$.
The idea of the computation is that if the Euclidean distances between $q$ and consecutive points~$x_i$ and~$x_{i+1}$ are both greater than $2c$, then $x_i$ and  $x_{i+1}$ are merged. Otherwise, there is a move from $x_i$ to~$q$. 
Figure \ref{fig:constantDist} depicts the logic of the cost computation. 
The base case is always a move operation, i.e., $D_c[m] = \vert x_m -q \vert$. 
For all other entries, we check the above-described condition. 
We can compute the cost per point. Formally, if $\vert x_i-q \vert \geq 2c$ and $\vert x_{i+1}-q \vert \geq 2c$ then $D_c[i] = D_c[i+1] +2c + \max(0, \vert x_{i+1} - q \vert - \vert x_i -q \vert)$,
otherwise $D_c[i] = D_c[i+1] + \vert x_i - q \vert $. 
\begin{figure}
    \centering
    \includegraphics[width=0.45\textwidth]{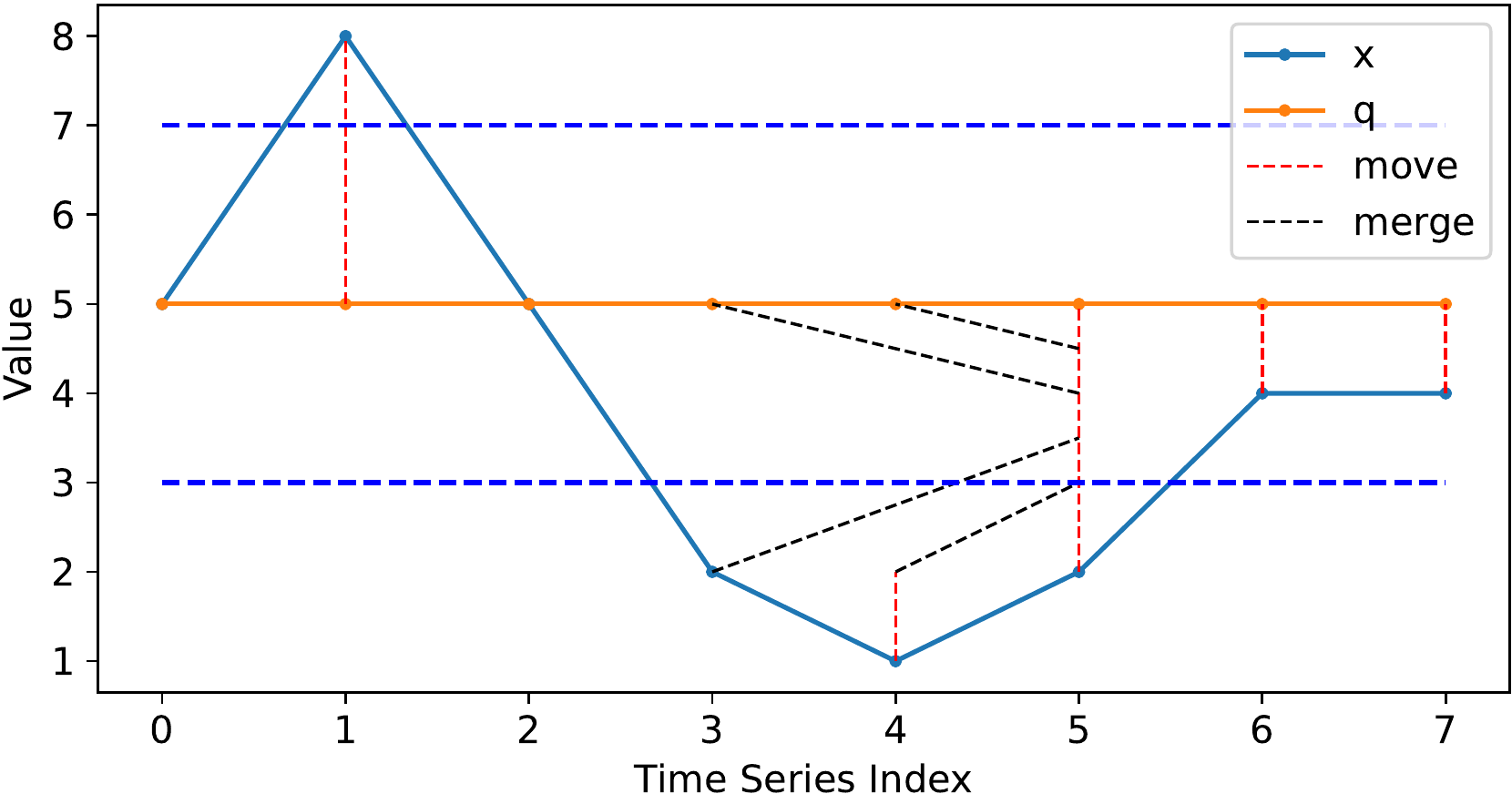}
    \caption{Alignment of $x=(5,8,5,2,1,2,4,4)$ to a constant time series with $q=5$. All merge operations are marked in red, merges and splits are marked in black. The two dotted lines refer to $c$ with $c=1$. The resulting dynamic programming table is $D_c=[13,13,10,10,8,5,2,1]$ with $d(x,q^{(8)})=13$.}
    \label{fig:constantDist}
\end{figure}
Computing the whole distance, there is no difference in filling the table in the right or reverse order. 
The reverse filling is only important for computing an upper and lower bound as discussed in Sections~\ref{sec:TriangleHeuristic} and~\ref{sec:LBtriangle}. 
We use the \textsc{cMSM} algorithm in the next section to develop a heuristic computing an approximate \textsc{MSM} distance. 

\section{Heuristics}
\label{sec:Heuristics}
In this section we develop different heuristics to obtain an approximation of the \textsc{MSM} distance.
The first approach, the \emph{triangle heuristic} results from applying \textsc{cMSM} and the upper triangle inequality. 
Second, the \emph{greedy heuristic} takes advantage of the particular transformation operations of \textsc{MSM}. 
Finally, we apply two state-of-the-art techniques to reduce the number of entries that are computed in the DP table: The Sakoe-Chiba band~\cite{sakoe1978dynamic} and the Itakura parallelogram~\cite{itakura1975minimum}.  

\subsection{Triangle Heuristic}
\label{sec:TriangleHeuristic}
Given two time series $x=(x_1,\ldots,x_m)$ and $y=(y_1,\ldots,y_n)$ we apply \textsc{cMSM} and the upper triangle inequality to compute $d_{triangle}(x,y)$. 
Without loss of generality, let $m \ge n$ and $q^{(m)}$ a constant time series of length $m$.  
By the triangle inequality $d(x,y) \leq d(x,q^{(m)}) + d(y,q^{(m)})$. 
\textsc{cMSM} is only applicable for time series of equal lengths, that means we can not compute $d(y,q^{(m)})$ directly with this algorithm. 
We compute $d(y,q^{(n)})$ with \textsc{cMSM} and add the split cost to align $y_n$ to $(q_{n+1},\ldots q_m)$. 
This leads to the following heuristic: 
\begin{equation*}
d_{triangle}(x,y) = d(x,q^{(m)}) + d(y,q^{(n)}) + (m-n)\cdot c.
\end{equation*}

\subsection{Greedy Heuristic} 
Computing the greedy heuristic $d_{greedy}(x,y)$ of two time series $x$ and $y$ follows a similar logic as the computation of \textsc{cMSM}. 
We give a dynamic program filling a one-dimensional table $D_g$ in reverse order.
Without loss of generality, assume $m\ge n$. 
The idea is to align a point~$x_{m-i}$ to a point~$y_{n-i}$ for $i< n$. All remaining points $(x_1,\ldots, x_{m-n})$ are aligned to $y_1$. 
The first entry $D_g[1]$ corresponds to the distance $d_{greedy}(x,y)$.
The mechanism of the greedy heuristic is to merge two points $x_{m-(i+1)}$ and $x_{m-i}$, $i<n$, if their Euclidean distances to $y_{n-(i+1)}$ and $y_{n-i}$, respectively, are both greater than $2c$. 
Otherwise, a move from  $x_{m-(i+1)}$ to $y_{n-(i+1)}$ is applied. 
For $0<i<n$ we get the following recursion:
If  $\vert x_{m-(i+1)}-y_{n-(i+1)} \vert \geq2c$ and  $\vert x_{m-i}-y_{n-i} \vert \geq2c$, then
\begin{align*}
    D_g[m-(i+1)] = D_g[m-i] + 2c + \vert x_{m-(i+1)}-x_{m-i} \vert+\vert y_{n-(i+1)}-y_{n-i} \vert;
\end{align*}
otherwise, 
$D_g[m-(i+1)] = D_g[m-i] + \vert x_{m-(i+1)}-y_{n-(i+1)}\vert.$
For $n\leq i <m$ a move operation is beneficial if the distance between the regarded points is smaller than $c$. Furthermore, only merge and split operation are necessary to balance the lengths of the time series. 
Formally, if  $\vert x_{m-(i+1)}-y_1 \vert \geq c$ and  $\vert x_{m-i}-y_1 \vert \geq c$, then
\begin{align*}
    D_g[m-(i+1)] = D_g[m-i] + c  + \vert x_{m-(i+1)}-x_{m-i} \vert;
\end{align*}
otherwise, $D_g[m-(i+1)] = D_g[m-i] + c + \vert x_{m-(i+1)}-y_{1}\vert$. 
The first entry to be computed is $D_g[n] = \vert x_m - y_n \vert $. The computed value is an upper bound of the MSM distance as it  traverses one path of the DP table.   

\subsection{Sakoe-Chiba Band}
Computing every element of $D$ can be time-consuming, especially for large time series. 
A common global constraint is the Sakoe-Chiba band~\cite{sakoe1978dynamic}, which reduces the amount of entries that have to be computed. 
The idea is to narrow the space around the diagonal of the dynamic programming table. 
It is independent of the current row $i$. 
A parameter~$b$ is the absolute number of entries to be computed on the right and on the left side of a diagonal entry $D[i,i]$.
The overall bandwidth is $B=2b +1$, that is the maximum absolute coordinate deviation between two aligned points. For example, if $B=3$ then the alignment of $x_1$ to $y_4$ is allowed, the alignment of $x_1$ to $y_5$ is prohibited. Figure \ref{fig:Itakura_Sakoe} (left) shows a schematic example. 
For quadratic ($m\times m$)-tables, a band of size $B=1$ ($b=0$) corresponds to the Euclidean distance, a band of size $B=m$ to the exact distance. 

\begin{figure}
    \centering
    \includegraphics[width=0.35\textwidth]{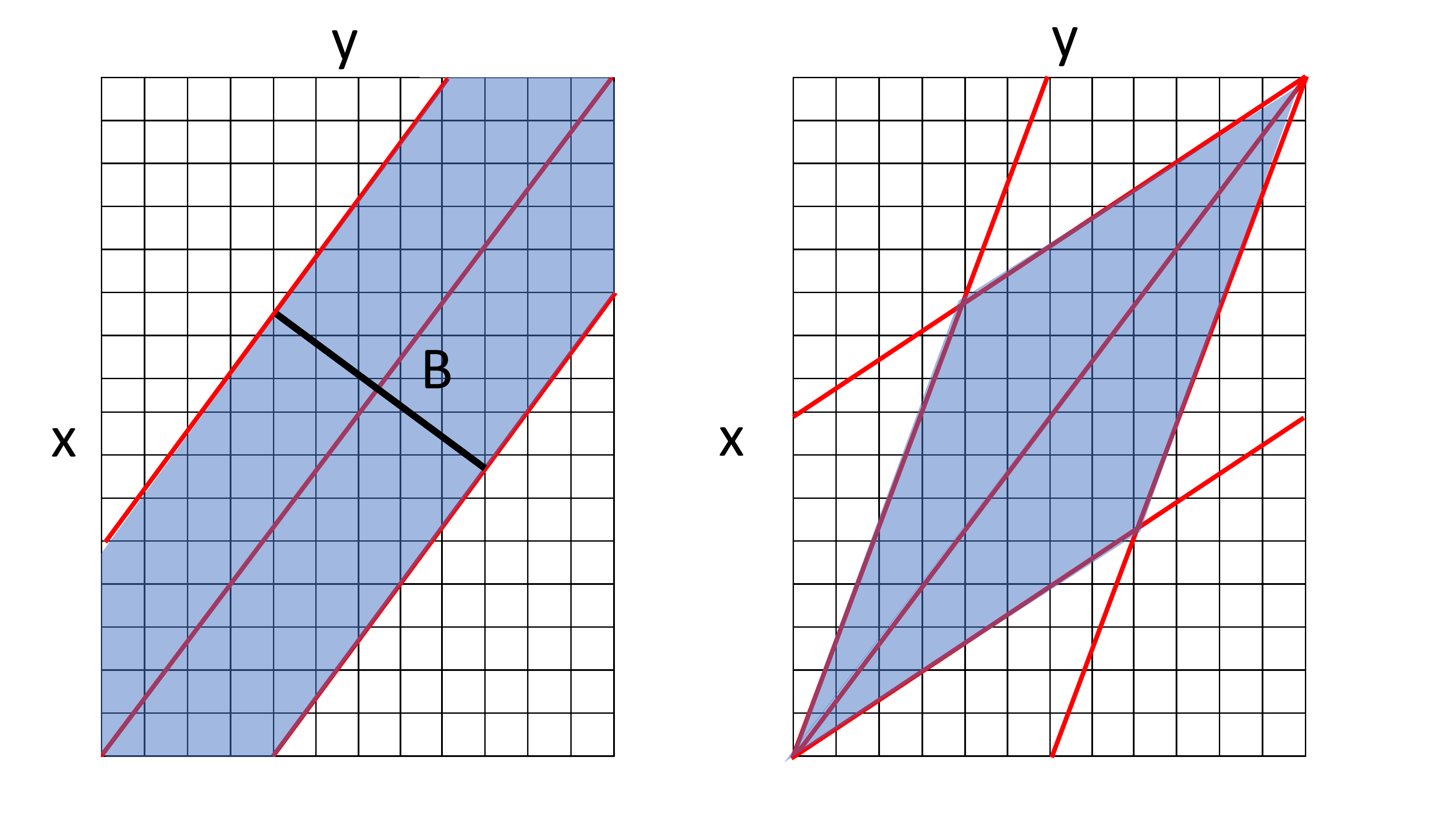}
    \caption{Schematic representation of the Sakoe-Band with $B=6$ (left) and an Itakura parallelogram with $d=\frac{1}{2}$ (right).}
    \label{fig:Itakura_Sakoe}
\end{figure}

\subsubsection{Slanted Band}
The Sakoe-Chiba band is appropriate  when $m\sim n$~\cite{giorgino2009computing}. If $\vert m-n \vert > B$, a band of size $B$ does not include the final coordinate $[m,n]$ in $D$. In this case, no solution exists. 
The Sakoe-Chiba band can be modified introducing the \emph{slanted band}. The new diagonal connects the entry $[0,0]$ and $[m,n]$ and has no longer a slope of 1 but of $m/n$. 

\subsection{Itakura Parallelogram}
A second common global constraint is the Itakura parallelogram~\cite{itakura1975minimum}. 
Again, the space around the diagonal of the dynamic table is narrowed but not with a band but with a parallelogram. 
That means, that in the beginning and the end the possible alignments are more restricted than in the middle of the table. 
Let $d\in(0,1]$ be the parameter that sizes the parallelogram. Setting $d=1$ corresponds to the Euclidean distance for a quadratic table. The smaller $d$ gets, the more entries are computed. Figure \ref{fig:Itakura_Sakoe} (right) illustrates the parallelogram resulting from $d=\frac{1}{2}$. 
The computation depends on the current row $i$. 
For each row~$i$ a new $start_i$ and $end_i$ coordinate has to be calculated. 
Given a ($m\times n$)-table, for $ i\in [m]$ we get 
\begin{align*}
    start_i &= \max\left(d \frac{n}{m} i, \frac{1}{d} \frac{n}{m} i - \frac{1-d}{d} n\right) \\
    end_i &= \min\left( \frac{1}{d} \frac{n}{m} i, d \frac{n}{m} i + (1-d)n \right).
\end{align*}

In the next section, we focus again on strategies to speed up the exact MSM computation. 
We consider pruning techniques to reduce the amount of entries to be computed.

\section{PrunedMSM}
\label{sec:pruningStrategies}
In this section we give the \emph{\textit{PrunedMSM}} algorithm for exactly computing the \textsc{MSM} distance between two time series. 
\textit{PrunedMSM} adapts the improved \textsc{MSM} by pruning table entries that do not lead to an optimal result. 
The algorithm follows a similar procedure as the $\textit{PrunedDTW}$ algorithm, proposed by Silva and Batista~\cite{SilvaB16}. 
A table entry $D[i,j]$ represents the optimal cost transforming $(x_1,\ldots x_i)$ to $(y_1,\ldots,y_j)$, that includes an alignment of point~$x_i$ and $y_j$. If this entry exceeds a certain value, it is likely that in the final distance computation of $x$ and $y$, the points~$x_i$ and $y_j$ will not be aligned since otherwise the transformation cost are not optimal. 
More precisely, an \emph{upper bound} ($UB$) of the exact distance is computed in advance. 
If the value of an entry is greater than the upper bound, the alignment can not be part of the optimal transformation. 
To compute the value of an entry~$D[i,j]$, the entries~$D[i-1,j-1]$, $D[i-1,j]$ and $D[i,j-1]$ are considered. If all these three entries have a value greater than the upper bound, the value of the entry~$D[i,j]$ also has to be greater than the upper bound. 

We now want to know which entries in a row~$i$ do not need to be considered.
To separate these entries from the relevant ones, a start parameter~$sc$ and an end parameter~$ec$ are updated for each row. 
We traverse a row~$i$ from left to right. 
Figure \ref{fig:PruningStrategy} (left) shows the update strategy for parameter~$sc$.
As long as all entries that are computed are greater than the upper bound, all entries in the next row with the same index $j$ are also greater than the upper bound. In the example in row ~2, $sc$ is set to 1. All other entry below $sc=1$ does not have to be computed because they are guaranteed to be greater than the upper bound.

The second parameter~$ec$ defines where to abort the computation of the table entries in the next row~$i+1$. 
Assume $ec = j$ is set in row $i$, if $D[i+1,j]>UB$, the following entry $D[i+1,j+1]$ does not have to be computed, because all entries to be considered are greater than the upper bound. The parameter~$ec$ is updated for the next row in that way, that it equals the index of the first entry in the row that is greater than the upper bound such that all following entries until position $j$ are also greater than the upper bound. 
Figure~\ref{fig:PruningStrategy} (right) shows an example of the update strategy of the parameter $ec$. 
After computing row 2, $ec$ is set to 3. For row~3, the entries $[3,2]$ and $[3,3]$ are greater than the upper bound. Since $ec=3$, the entry $[3,4]$ will not be computed. After computing row~3, $ec$ is thus set to 2.  

\begin{figure}
    \centering
    \includegraphics[width=0.5\textwidth]{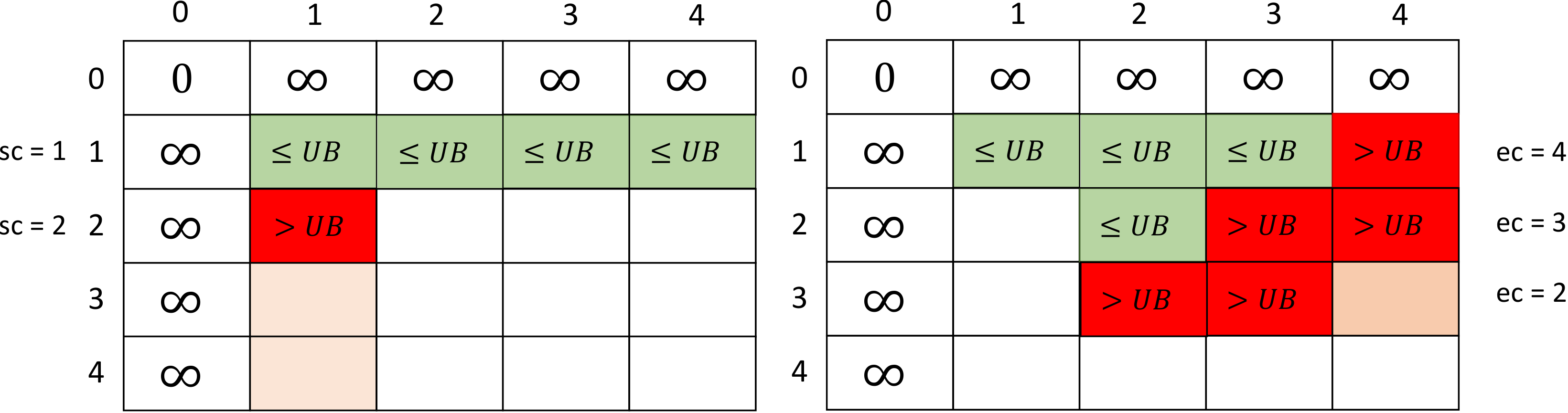}
    \caption{Representation of the pruning strategy of \textsc{MSMPruned}. Left: Setting the parameter $sc$, right: setting the parameter $ec$.}
    \label{fig:PruningStrategy}
\end{figure}

We now develop several bounding strategies.

\subsection{Upper Bounds}
To achieve good pruning results, it is crucial to determine appropriate upper bounds. 
The smaller the upper bound is, the more likely the entries of the dynamic programming table are pruned. 
We can use all presented heuristics in Section~\ref{sec:Heuristics} to receive different upper bounds. 
 
\subsubsection{Updating the Upper Bound}
\label{updatingUB}
The dynamic tables of the Sakoe-Chiba band heuristic and the Itakura parallelogram heuristic store the optimal value aligning two points on the (slanted) diagonal. 
For the triangle heuristic a similar table can be built. 
The greedy heuristics stores the values for a shifted diagonal with slope~1 and the following vertical or horizontal alignment.
Calculating these (diagonal) values in reverse order, makes it possible to update the upper bound every time these entries are computed.  
Let $R$ be a one-dimensional table of size $m$ storing these values of a heuristic. 
The upper bound~$UB$ for a quadratic table can be updated after setting the value for $D[i,i]$:
$UB = D[i,i] + R[i+1].$

All values that will be guaranteed to be greater than the upper bound will not be computed. 
To reduce the amount of entries to be computed even more, we introduce two lower bound strategies in the next section. 

\subsection{Lower Bounds}
Computing an entry~$D[i,j]$ gives the optimal cost to transform $(x_1,\ldots x_i)$ into $(y_1,\ldots,y_j)$.
We will now estimate the cost for an entry~$D[i,j]$ for the remaining alignment of $(x_{i+1},\ldots x_m)$ to $(y_{j+1},\ldots,y_n)$.
These remaining cost are a \emph{lower bound} ($LB[i,j]$) of the optimal cost. 
For each entry, we get an estimation $E$ for the minimum total cost: $ E[i,j] = D[i,j] +  LB[i,j]$.
We now check for each entry, if $E[i,j] > UB$.  
Hence, introducing a lower bound increases the likelihood of pruning more table entries.
The most important property of the estimation $E[i,j]$ is that it never overestimates the optimal transformation cost.  
In the following, we introduce two lower bound strategies. 

\subsubsection{LB$_{ms}$}
The first lower bound $LB_{ms}$ counts the amount of remaining merge or split operations. 
We have $LB_{ms}[i,j] = \vert m-i - (n-j) \vert \cdot c$.
For example, let $x = (x_1, x_2,x_3)$ and $y=(y_1, y_2, y_3, y_4)$. For the position $[1,4]$, at least two merge operations in $x$ are needed to align $x_2$ and $x_3$ to $y_4$, that is, at least cost of~$2c$ are added to align $x$ to $y$.

\subsubsection{LB$_{t}$}
\label{sec:LBtriangle}
The second lower bound $LB_{t}$ makes use of the \textsc{cMSM} algorithm and the lower triangle inequality.
The computation is similar to the computation of the triangle heuristic in Section \ref{sec:TriangleHeuristic}.
Assume a constant time series $q$, the \textsc{MSM} distances~$d(x,q^{(m)})$ and $d(y,q^{(n)})$ between~$q$ and~$x$, and~$q$ and~$y$, respectively, is computed in reverse order. 
Applying the lower triangle inequality results in the following estimation: $d(x,y) \geq \vert d(x,q^{(m)}) - d(y,q^{(n)}) - (m-n)\cdot c\vert$.
Further the intermediate cost aligning $(x_i,\ldots,x_m)$ to $(q_i,\ldots,q_m)$ are stored in $D_c^x[i]$ of an one-dimensional table~$D_c^x$ of size~$m$.
The entry $D_c^x[1]$ corresponds to the distance~$d(x,q^{(m)})$. 
Analogously, all intermediate cost for $y$ and~$q^{(n)}$ are stored in~$D_c^y$. 
We now apply the lower triangle inequality for every entry~$[i,j]$. 
The minimum remaining cost of an entry $[i,j]$ considers the  alignment of $(x_{i+1} ,\ldots, x_m)$ to $(y_{j+1},\ldots,y_n)$. 
We get $d((x_{i+1} ,\ldots, x_m),(y_{j+1},\ldots,y_n)) \geq \vert D_c^x[i+1] - D_c^y[j+1] - (m-(i+1)-(n-(j+1)))\cdot c \vert $. 
Adding the right part of this inequality to~$D[i,j]$ to get a lower bound would overestimate the optimal transformation cost.
At the transition from $(x_i, x_{i+1})$ to $(y_j,y_{j+1})$, these points may be merged which saves move cost while creating additional merge costs. 
The table~$S$ includes potential move cost at this transition, that has to be subtracted, that is $a = \vert x_i-q\vert + \vert y_j-q\vert$, and respective merge/split cost that has to be added: 
\begin{equation*}
S[i,j]= \begin{cases}
\max(a-c,0) & 1<i<m, 1<j<n \\
0, &\text{if } i=m, j=n\\
a, & \text{otherwise.}
\end{cases}
\end{equation*}
Now, the lower bound~$LB_t$ is computed as follows:
\begin{align*}
    LB_t[i,j] &= \vert D_c^x[i+1] - D_c^y[j+1] \\
    &- (m-(i+1)-(n-(j+1))) \cdot c \vert -S[i,j].
\end{align*}

\subsection{Further Improvement}
\label{pruningBand}
To prune even more table entries, introduce another \emph{pruning band}. 
The pruning band is similar to the idea of the Sakoe-Chiba band and the Itakura parallelogram. For each row, the maximum absolute coordinate deviation between two aligned points is computed.
Moving horizontally in the dynamic programming table means that a split operation is applied. 
An upper bound gives a limitation for the maximum amount of split operations that can be applied. 
More precisely, setting a bandwidth $b=\lceil UB/c \rceil$ prunes the table while still giving the optimal cost transformation. 

In the next section, we will evaluate \textit{PrunedMSM}, all presented heuristics and benchmark the fastest exact \textsc{MSM} algorithm with the \textit{PrunedDTW}. 

\section{Experiments}
\label{sec:experiments}

The running times of our Java implementations\footnote{All code is available on GitHub: \url{https://github.com/JanaHolznigenkemper/msm_speedup.git}} are measured on a computer with Ubuntu Linux 20.04, AMD Ryzen 7 2800 CPUs, 32GB of RAM, Java version 15.0.1.  

\subsection{Data}
We performed our experiments on 117 data sets of the UCR archive~\cite{UCRArchive2018} containing time series of equal lengths.  
All tested algorithms also work for time series of unequal lengths. The parameter $c$ for merge and split cost is set to $0.5$ as in the study of Paparrizos et al.~\cite{PaparrizosLEF20}.
\subsection{Heuristics} 
We first evaluate the accuracy and running time of the proposed heuristics. 
All running times are compared to the improved \textsc{MSM} algorithm.
As a first result, the improved \textsc{MSM} algorithm is 3.2\% faster on average over all data sets than the original implementation. 
For the triangle heuristic it is crucial to find a constant time series $q^{(m)}$ such that the distance to both time series is as small as possible. 
Since all time series are normalized with a mean of 0, we choose $q=0$. 
We test different band sizes ($b = 0\%, 10\%,20\%)$ for the Sakoe-Chiba heuristic relative to the time series input length $m$.
The parameter $d$ for the Itakura heuristic is set to $d\in(\frac{1}{2}, \frac{2}{3}, \frac{3}{4})$.
The relative error is the relative deviation of the distance computed by a heuristic and the original distance. 

\begin{figure}[t]
    \centering
    \includegraphics[width=0.43\textwidth]{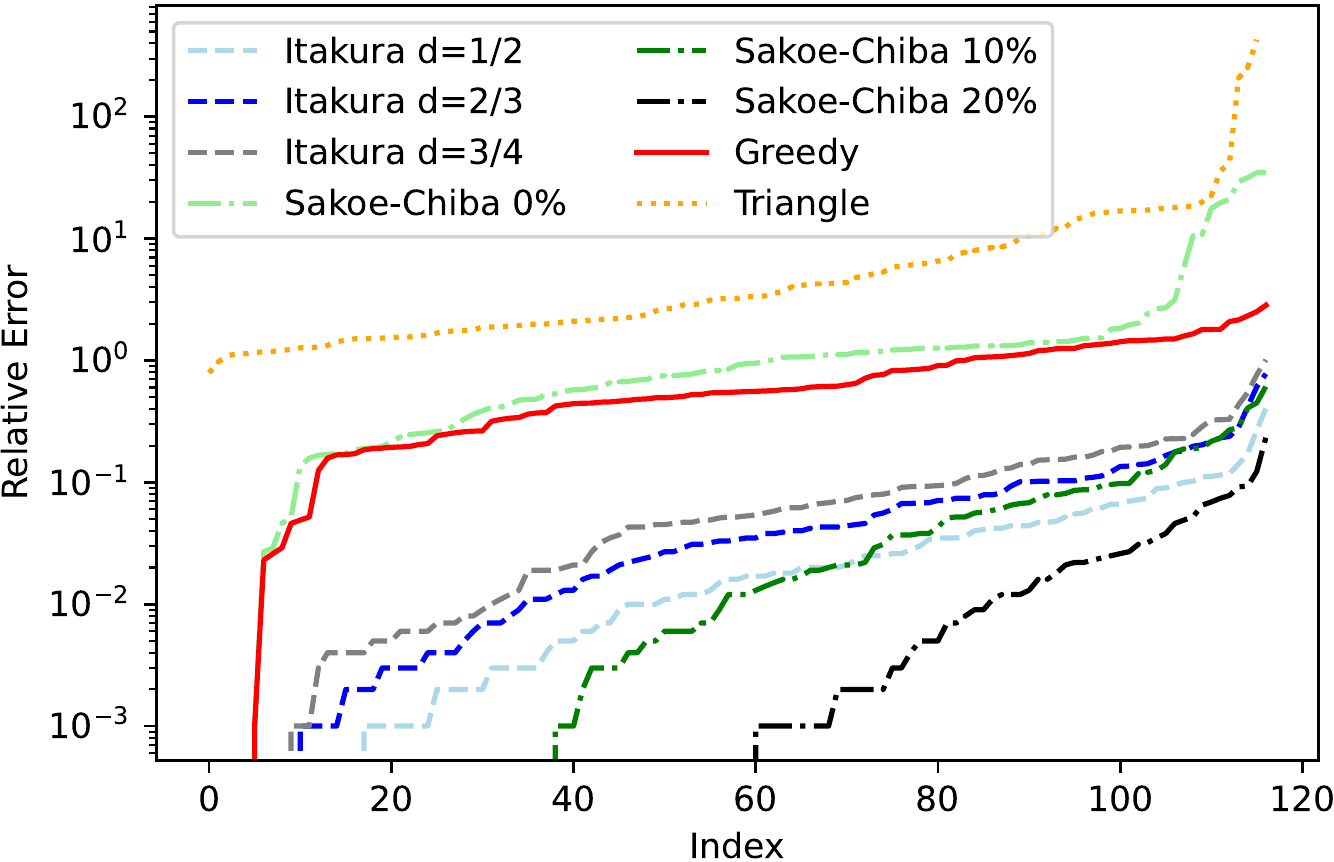}
    \caption{Average relative error per data set computing the \textsc{MSM} by the proposed heuristics. The results are sorted by the relative error for each heuristic.} 
    \label{fig:allHeuristics}
\end{figure}
Figure \ref{fig:allHeuristics} gives an overview of the tested heuristics. 
For each algorithm, the relative error is averaged by data set and sorted in an ascending manner, that is, index $i$ corresponds to the data set with the $i$-th best relative error. 
The most accurate results are achieved by the Sakoe-Chiba heuristic for $b= 10\%,20\%$ and by the Itakura heuristic for $d=\frac{1}{2}$. 
The opposite applies for the running time. 
Figure \ref{fig:allHeuristicsRT} gives the average running times of the heuristics per data set, again sorted by running time for each heuristic.
The approaches with the best accuracy results are much slower than other heuristics.

\begin{figure}[t]
    \centering
    \includegraphics[width=0.43\textwidth]{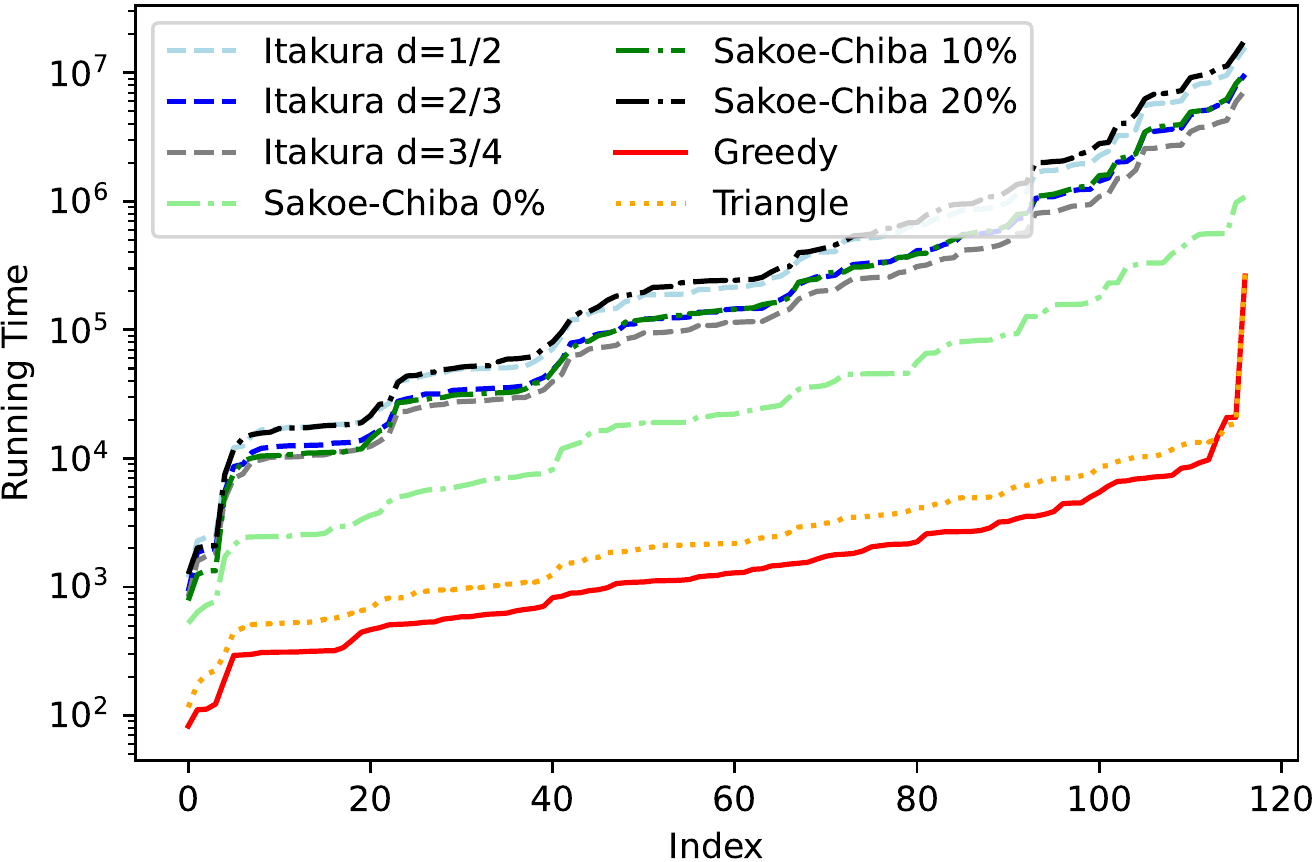}
    \caption{Average running time in ns per data set computing the \textsc{MSM} by the proposed heuristics. The results are sorted by running time for each heuristic.}
    \label{fig:allHeuristicsRT}
\end{figure}

Regarding the trade-off between running time and accuracy, the Sakoe-Chiba heuristic with a band size of $b=10\%$ or the Itakura parallelogram with $d=\frac{2}{3}, \frac{3}{4}$ seem the best choice when focusing on accuracy; the greedy heuristic seems the best choice when focusing on running time.

 \begin{figure}[t]
    \centering
    \includegraphics[width = 0.45\textwidth]{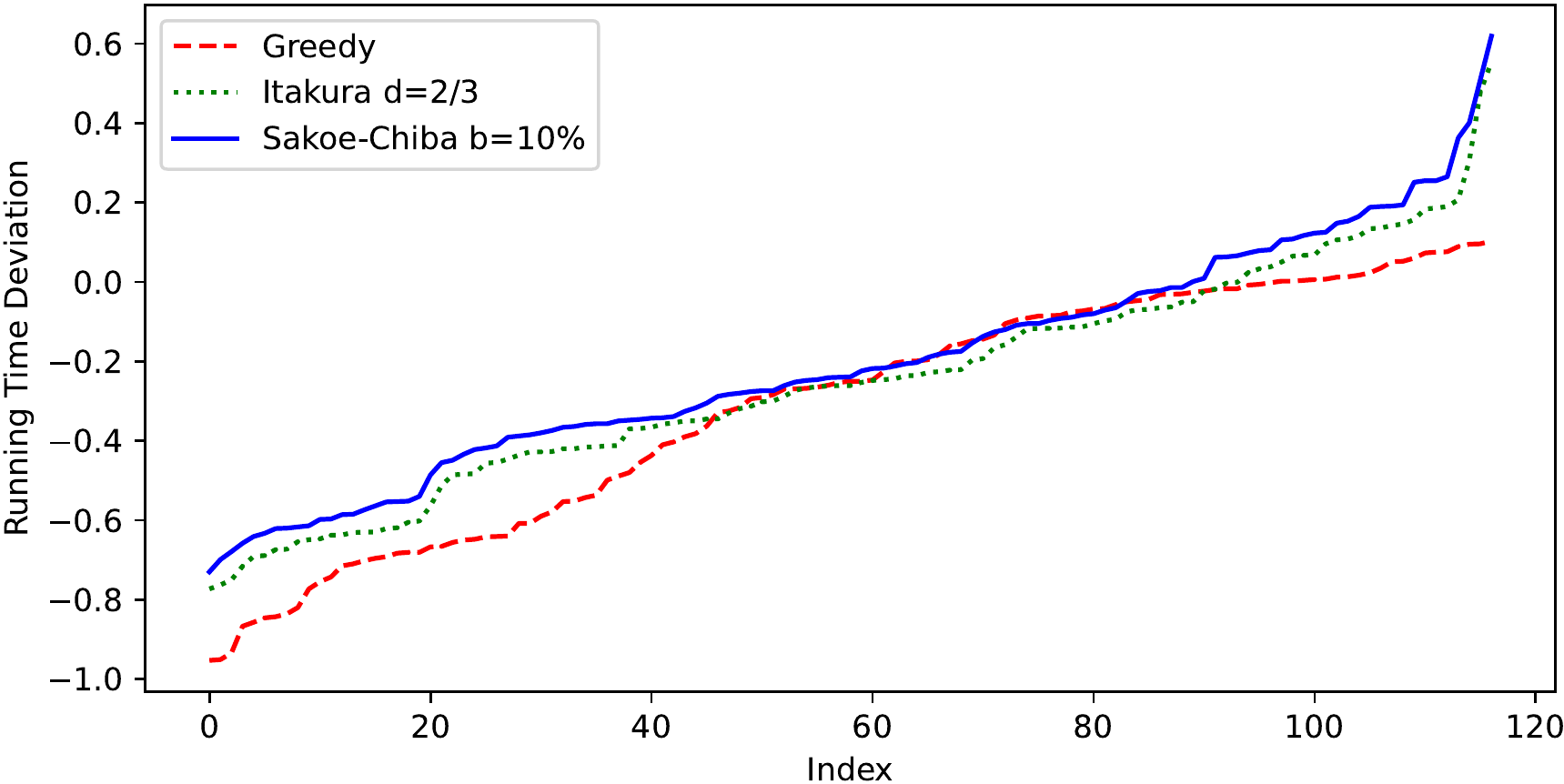}
    \caption{Average running time deviation per data set for the \textit{PrunedMSM} algorithm using the lower bound $LB_t$ and different upper bounds. The results are sorted by the deviation for each heuristic.}
    \label{fig:SimpleLB}
\end{figure}
 \subsection{Pruning}
 In the following, we test \textit{PrunedMSM} against the improved \textsc{MSM} of Section \ref{sec:Speedup}. We test three different upper bounds for the lower bound $LB_{ms}$
 The first is  the greedy heuristic since it is the fastest heuristic. We further take a Sakoe-Chiba band with $b=10\%$ of the time series lengths and an Itakura parallelogram with $d=\frac{2}{3}$. Both approaches are slower than the greedy heuristic but achieve good results regarding accuracy.

 Figure \ref{fig:SimpleLB} shows the running time deviation compared to the improved \textsc{MSM} implementation. The results are sorted by deviation. \textit{PrunedMSM} with an upper bound given by the greedy heuristic has the best running time: 
the greedy upper bound performs best for 78 data sets, the Itakura ($d=\frac{2}{3}$) upper bound for 26 data sets, and the Sakoe-Chiba band ($b=10\%$) for 13 data sets. 
 We will not consider the Sakoe-Chiba band for our further experiments.
 
 We next evaluate the influence of updating the upper bound as described in Section~\ref{updatingUB} and inserting the pruning band of Section \ref{pruningBand}. 
Figure \ref{fig:simpleLBUB} shows the running time deviation compared to the improved \textsc{MSM} for  Itakura ($d=\frac{2}{3}$) and the greedy heuristic for updated bounds with pruning band and non-updated upper bounds.
The running time results without pruning band 
are similar to the ones with band but slightly worse. 
The graph shows a clear running time advantage for the updated greedy upper bound.
Table \ref{tab:pruning} summarizes the number of data set for which a certain pruning strategy performs best. 

\begin{figure}[t]
    \centering
    \includegraphics[width=0.45\textwidth]{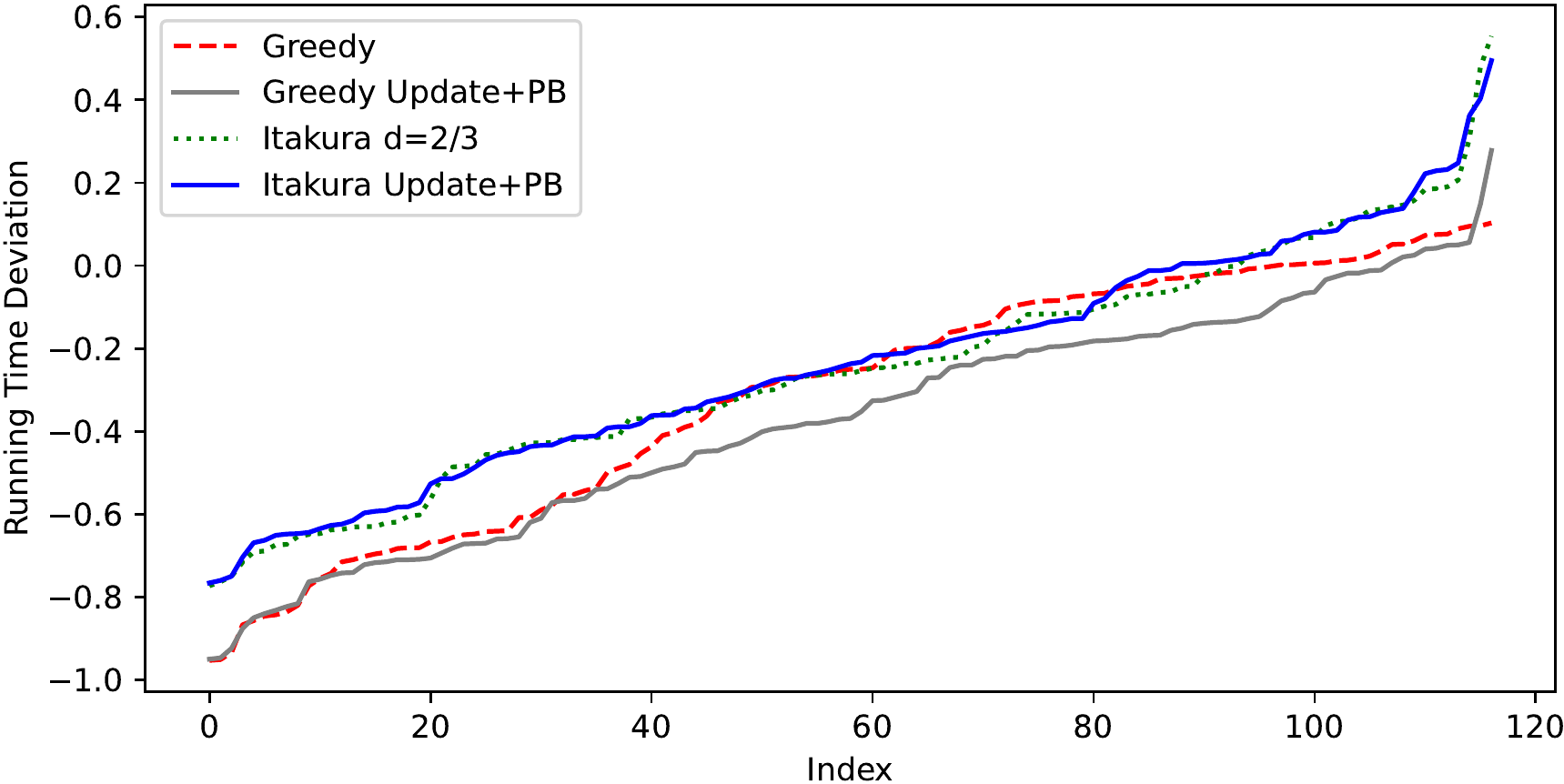}
    \caption{Average running time deviation per data set for \textit{PrunedMSM} using the lower bound $LB_t$ and the greedy or Itakura upper bound with and without updating the upper bound and inserting a pruning band (PB).  The results are sorted by deviation for each heuristic.}
    \label{fig:simpleLBUB}
\end{figure}

\begin{table}
\caption{Number of data sets for which the given pruning strategy performs best (g=greedy, Itak=Itakura $d=\frac{2}{3}$, U=Update Upper Bound, B=Pruning Band).}
\label{tab:pruning}
\centering
\resizebox{0.4\textwidth}{!}{%
  \begin{tabular}{|c|c|c|c|c|c|} 
 \hline
g & g+U & g+U+B &  Itak & Itak+U & Itak+U+B \\
 \hline
 28 & 19 & 59 & 5 & 1 & 5 \\ 
 \hline
\end{tabular}}
\end{table}

We further test the performance of \textit{PrunedMSM} selecting always the maximum lower bound of  $LB_t$ and $LB_{ms}$. First, we compare $LB_t$ to $L_{ms}$.
For 34.6\% of all computed lower bound entries $[i,j]$ $LB_t[i,j]>LB_{ms}[i,j]$. Compared to the running time of  \textit{PrunedMSM} with only $LB_{ms}$ there is no improvement since the additional number of operation per loop is not compensated by the further space reduction. 

\subsection{DTW Comparison}
Finally, we compare the fastest \textit{PrunedMSM} algorithm, that is, \textit{PrunedMSM} with greedy upper bound including updates and the pruning band and the $LB_{ms}$, to a state-of-the-art DTW distance computation, \textit{PrunedDTW}.
Since \textit{PrunedDTW} is implemented in C++, we implemented \textit{PrunedMSM} in C++ for a meaningful comparison.  
The C++ code was compiled with the GNU C++ compiler (g++) using the -O3 optimization flag.
Figure \ref{fig:DTWvsMSM} shows the average running times per data sets. For 94 out of 117 data sets, \textit{PrunedMSM} achieves better running times than \textit{PrunedDTW}.  

\begin{figure}[t]
    \centering
    \includegraphics[width=0.4\textwidth]{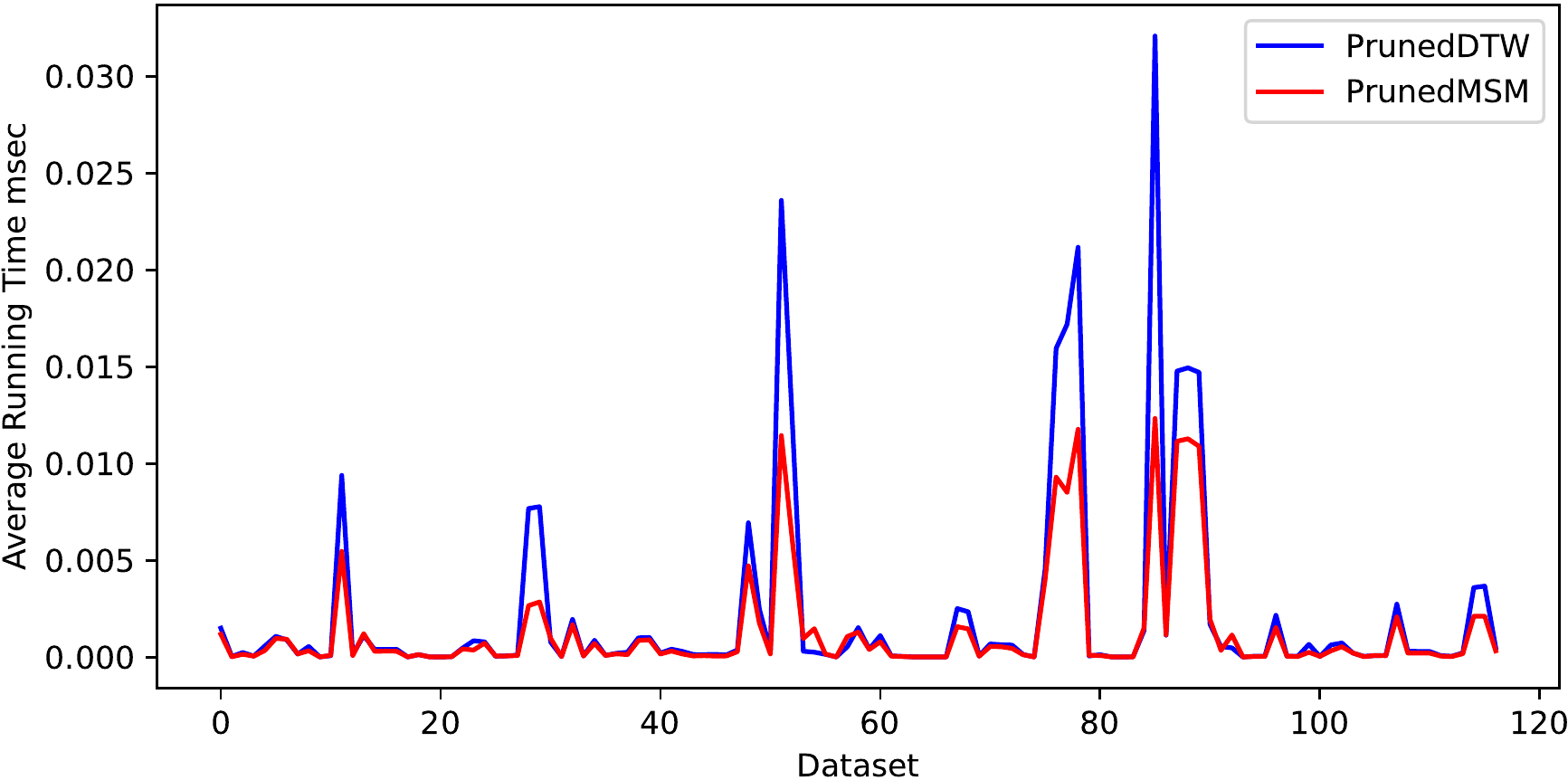}
    \caption{Average running time per data set for the \textit{PrunedMSM} and \textit{PrunedDTW} algorithms.}
    \label{fig:DTWvsMSM}
\end{figure}

\section{Conclusion and Future Work}
\label{sec:Conclusion}
This paper introduces several heuristics and pruning strategies to speed up the computation of the Move-Split-Merge (\textsc{MSM}) metric. 
Experimental results show good accuracy and excellent running time advantages of the proposed heuristics. 
Moreover, we achieved to speed up the computation of exact \textsc{MSM} by introducing \textit{prunedMSM} so that it is now faster than a state-of-the-art \textsc{DTW} distance computation for a majority of the popular UCR data sets. 

In future work, we will first investigate the accuracy of the proposed heuristics regarding classification tasks, like 1-NN classification. 
This may include an analysis of the impact of the parameter~$c$ regarding running time and accuracy. 
Second, we plan to extend the \textsc{cMSM} algorithm to obtain linear-time algorithms for computing the distance between arbitrary time series and structured time series, for example piecewise-linear time series. 
Third, we will analyze the use of \textsc{PrunedMSM} for similarity search where one fixed time series is compared to a large set of time series arriving in a stream, similar to the approach of the UCR suite~\cite{SilvaGKB18}. 

\bibliographystyle{abbrv}
\bibliography{references}  






\appendix

\section{Properties of Transformation Graphs}
\label{app:propertiesTrafoGraphs}
In the following, we summarize some important known properties about the transformation graph by Stefan et al. \cite{StefanAD13} and Holznigenkemper et al. \cite{holznigenkemper2023computing}.  
The first lemma states that there is always an optimal monotonic transformation.

\begin{lemma}[Monotonicity lemma \cite{StefanAD13}]
\label{lemma:monotonicity}
For any two time series $x$ and $y$, there exists an optimal transformation that converts $x$ into $y$ and that is monotonic. 
\end{lemma}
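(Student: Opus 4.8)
The plan is to prove the lemma by an \emph{uncrossing} (exchange) argument carried out on transformation graphs. First I would fix the meaning of \emph{monotonic}: a transformation $T(x,\mathds{S})=y$ is monotonic if the alignment relation it induces is order preserving, i.e., whenever some transformation path of $G_{\mathds{S}}(x,y)$ aligns $x_i$ to $y_j$ and another aligns $x_{i'}$ to $y_{j'}$ with $i\le i'$, then $j\le j'$. A \emph{crossing} is a pair of aligned pairs violating this. The goal is then to turn an arbitrary optimal transformation into an optimal one without any crossing.

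The main line of attack is to argue over a well-chosen optimal witness. Among all optimal transformations from $x$ to $y$ I would select one minimizing a potential $\Phi(\mathds{S})$ that measures non-monotonicity, for instance the number of crossing pairs (or the total index displacement $\sum|i-j|$ summed over aligned pairs). Suppose for contradiction that this minimizer still contains a crossing. By passing to an innermost crossing I may assume the two offending transformation paths cross \emph{adjacently}, with no third aligned pair nested strictly between them, which keeps the surgery local.

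The core step is then a local modification of $G_{\mathds{S}}(x,y)$ that reroutes the two crossing paths — swapping their tails beyond the crossing — so that they no longer cross. Here I would verify two things: that the modification preserves the multiset of source points (the entries of $x$) and of sink points (the entries of $y$), and that it keeps the graph acyclic, so that the result is again a valid transformation of $x$ into $y$. The only costs that can change are the move costs on the two rerouted paths and at most one additional split/merge pair. To bound these, I would invoke the triangle inequality $|a-c|\le|a-b|+|b-c|$ for the metric $|\cdot|$ governing move cost, together with the fact that split and merge carry the same fixed cost $c$, to conclude that the rerouted transformation costs no more than the original. Since the original is optimal, the rerouted one is optimal as well but has strictly smaller $\Phi$, contradicting minimality; hence the chosen optimal transformation is crossing-free, i.e., monotonic.

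The hard part will be the cost bound in the surgery step, because a naive swap of move endpoints can \emph{increase} total move cost (a direct endpoint exchange is governed by a rearrangement inequality, which cuts the wrong way in general). The resolution I expect to need is to route the uncrossing \emph{through} the split and merge nodes already present at the crossing rather than performing a blunt endpoint swap, so that the cost change is controlled by the triangle inequality instead. Getting the bookkeeping right for crossings that involve split or merge edges (rather than two plain move edges), and checking that every newly introduced $\merge$ still acts on two equal-valued points as required by its definition, is where the real care is concentrated; the order-preservation and termination via $\Phi$ are then comparatively routine.
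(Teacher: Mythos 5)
First, a framing point: the paper does not prove this lemma at all --- it is imported verbatim from Stefan et al.~\cite{StefanAD13} --- so your attempt can only be judged on its own terms, and on those terms it has a fatal flaw: it targets the wrong notion of \emph{monotonic}. You read ``monotonic'' as order-preservation of the alignment \emph{indices} (no two transformation paths cross), i.e., the DTW-style warping-path condition. In the MSM setting that property is automatic and carries no content: a move replaces a point in place, a split replaces a point by two adjacent copies, and a merge replaces two adjacent equal points by one, so every operation preserves the left-to-right order of the evolving sequence. Consequently the descendants of $x_i$ can never end up to the right of the descendants of $x_{i'}$ for $i<i'$, and \emph{every} transformation graph --- optimal or not --- is crossing-free. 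Your entire apparatus (the potential $\Phi$, the innermost crossing, the rerouting surgery whose cost bound you yourself flag as the unresolved hard step) is aimed at a phenomenon that cannot occur, so even if completed it would prove a vacuous statement.

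What the lemma actually asserts is monotonicity of \emph{values} along transformation paths: there is an optimal transformation in which no point is moved up and later moved back down (or vice versa), so that every path --- and hence every tree of the decomposition --- is increasing or decreasing. This is exactly how the paper uses it: Appendix~\ref{app:propertiesTrafoGraphs} classifies trees as \emph{increasing} or \emph{decreasing}, and the proof in Appendix~\ref{app:cMSM} invokes the lemma to conclude that merging two consecutive points $x_i, x_{i+1}$ with $x_i < q < x_{i+1}$ is not optimal --- a claim about values that index-monotonicity cannot deliver, since such a merge crosses no paths. A correct proof is still an exchange argument, but on values: if a path's value sequence zigzags (e.g., rises above both endpoints or dips below them), shortcut the detour; the triangle inequality $\vert a-b\vert + \vert b-c\vert \geq \vert a-c\vert$ on move costs shows the total cost does not increase, and the delicate part is re-choosing the common value at which merged points meet so that all merges remain legal after the shortcut. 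Your draft never engages with path values at all, so the gap is not missing bookkeeping but a mis-aimed argument; secondarily, even within your own framing, the one step that carries all the difficulty (the cost bound for the surgery) is deferred rather than proved.
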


An important result states that there always exists an optimal transformation graph only containing paths from source to sink nodes of the following consecutive edge types \cite{holznigenkemper2023computing}: 
\begin{description}
    \item[Type 1:] move - move - $\cdots$ - move - move 
    \item[Type 2:] split/move  - $\cdots$ -split/move
    \item[Type 3:] merge/move -  $\cdots$ - merge/move
    \item[Type 4:] Type 3 - merge - move - split - Type 2
\end{description}

It follows that there exists a transformation graph which can be decomposed in its weakly connected component that fulfill the properties of a tree. 
We summarize the possible structures of these trees:

\begin{description}
    \item[Type-1-Tree:] contains only paths of Type 1, that is, there is only one move edge in the tree connecting one source and sink node (see Figure \ref{fig:tree_defs}a). 
    \item[Type-2-Tree:] contains only paths of Type 2. It has only one source node and at least two sink nodes (see Figure \ref{fig:tree_defs}b).
    \item[Type-3-Tree:] contains only paths of Type 3. It has at least two source nodes whose paths reach the same sink node (see Figure \ref{fig:tree_defs}c).
    \item[Type-4-Tree:] contains only paths of Type 4. It has at least two source and two sink nodes (see Figure \ref{fig:tree_defs}d). 
\end{description}

\begin{figure}[b]
    \centering
    \includegraphics[width=0.5\textwidth]{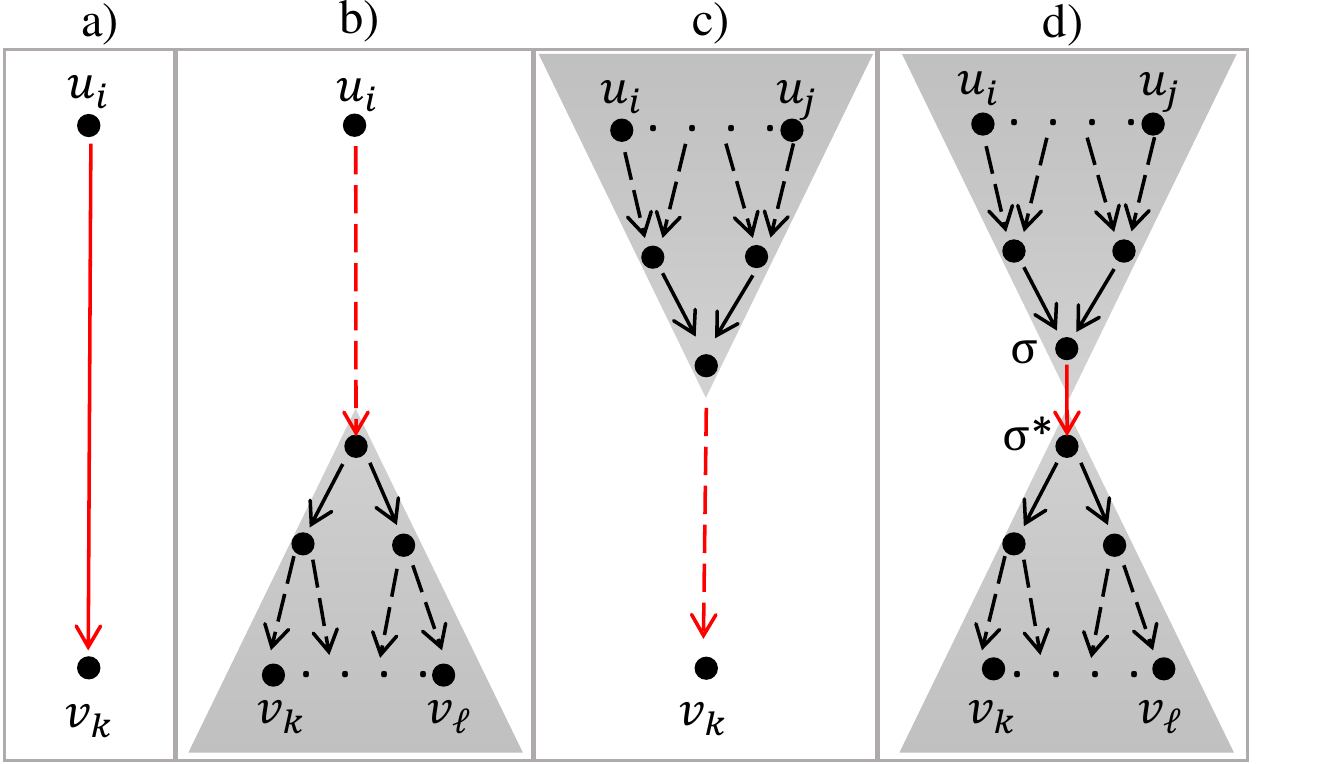}
    \caption{All red edges are move operations. Black arrows are merge or split edges. The dashed lines represent paths from one node to another without specifying how many intermediate node are on them.  a) Type-1-Tree b) Type-2-Tree c) Type-3-Tree d) Type-4-Tree.}
    \label{fig:tree_defs}
\end{figure}

The following lemma states that there always exists an optimal transformation graph, where every weakly connected component is a tree of Type~1--4. 

\begin{lemma}{\cite{holznigenkemper2023computing}}
Let $x$ and $y$ be two time series. Then there exists an optimal transformation graph $G_{\mathds{S}}(x,y)$ such that its weakly connected components are only trees of Type~1--4. 
\end{lemma}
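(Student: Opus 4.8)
The plan is to start from the optimal transformation graph $G_{\mathds{S}}(x,y)$ guaranteed by the path characterization of Holznigenkemper et al.~\cite{holznigenkemper2023computing}, in which every source-to-sink path already follows one of the four consecutive edge-type patterns, and to upgrade this path-level statement to a component-level one. First I would show that, after possibly modifying the graph while preserving optimality (and keeping it monotonic via Lemma~\ref{lemma:monotonicity}), every weakly connected component is acyclic in the undirected sense, i.e.\ a tree. The key observation is that an undirected cycle inside a component corresponds to two internally disjoint directed paths sharing a common earlier node $u$ and a common later node $v$ (a \emph{diamond}). Since a split edge is the only operation that increases the number of copies of a value (sending one node to two equal copies) and a merge edge the only one that decreases it, such a diamond must contain a split that fans out at $u$ and a merge that funnels the branches back together at $v$. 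Splitting a value and later re-merging exactly those copies is redundant: deleting the superfluous split/merge pair and rerouting yields a transformation of strictly smaller cost (it saves the $2c$ of the wasted split and merge), contradicting optimality. Hence no diamonds occur and each component is a tree.

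Next I would classify each tree by the operation types appearing on its edges, using the path characterization to rule out illegal mixtures. If a component contains only move edges, every path is of Type~1; since a move is one-to-one, the component is a single source-to-sink path, a Type-1-Tree. If it contains splits but no merges, then by the Type-2 pattern every path is \textsf{split}/\textsf{move} and, because the component is a tree with no merges, every non-source node has a unique parent, forcing a single source that the splits branch out to at least two sinks; this is a Type-2-Tree. The merge-only case is symmetric and yields a Type-3-Tree.

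The main obstacle is the remaining case in which both split and merge edges appear in one component. Here I would lean heavily on the Type-4 pattern $\merge/\move$-$\cdots$-$\merge$-$\move$-$\spl$-$\cdots$-$\spl/\move$: on every source-to-sink path the merges strictly precede the single bridging move, which strictly precedes the splits, and no path revisits a merge after a split. I would argue that all merge-carrying prefixes of the component funnel into one common node, that this node is joined by a single move edge to one common node from which all split-carrying suffixes emanate, and hence that the merge part and the split part are each themselves trees meeting only in that one bridging move edge. Establishing this ``funnel--bridge--fan'' shape — in particular that there is exactly one bridging move and at least two sources and two sinks — is the delicate step, and it is precisely what identifies the component as a Type-4-Tree. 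Combining the four cases shows every weakly connected component is a tree of Type~1--4, completing the proof.
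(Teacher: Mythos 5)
First, a point of comparison that matters for this review: the paper does \emph{not} prove this lemma. Both the lemma and the path-type characterization you take as your starting point are imported verbatim, with citation, from Holznigenkemper et al.\ \cite{holznigenkemper2023computing}; there is no in-paper proof to match your argument against, so your proposal has to stand on its own. It does not, for two reasons. The first gap is your ``key observation'' that an undirected cycle inside a weakly connected component is a diamond, i.e.\ two internally disjoint directed paths sharing a common first node $u$ and last node $v$. That is false in a general DAG: an undirected cycle may alternate between several local sources and local sinks (for instance, two split nodes each fanning out to two merge nodes produces an undirected cycle containing no diamond at all), so acyclicity does not follow from the diamond surgery. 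Moreover, even on a genuine diamond, ``deleting the superfluous split/merge pair and rerouting'' is not innocent: internal nodes of the two branches may carry further split or merge edges into the rest of the graph, and your cost bookkeeping ignores the orphaned subtrees this creates. The irony is that no cost argument is needed here at all: once you assume, as you do, an optimal graph in which every source-to-sink path is of Type~1--4, treeness is nearly immediate. In a DAG, every undirected cycle contains a maximal directed segment running from a local source of the cycle (out-degree $2$, hence a split node) to a local sink of the cycle (in-degree $2$, hence a merge node); since every node of a transformation graph lies on some source-to-sink path, this segment extends to a source-to-sink path containing a split edge strictly before a merge edge, which matches none of the four types.

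The second and more serious gap is the mixed case. Types~1--3 are the easy classifications; the entire substance of the lemma is that a component containing both merges and splits has the ``funnel--bridge--fan'' shape of a Type-4-Tree, and this is exactly the step you label as delicate and then do not carry out. What has to be shown is that all Type-4 paths through one tree component use the \emph{same} bridging move edge; one must exclude, for example, two distinct merge-funnels feeding two distinct bridges inside a single connected component. This requires combining treeness with the rigidity of the Type-4 pattern (exactly one move edge between the last merge and the first split on any path) and, where paths overlap, exchange or monotonicity arguments in the spirit of Lemma~\ref{lemma:monotonicity}; a proof that stops where yours stops has not established the statement. A further minor issue: as the paper defines it, a Type-1-Tree contains a \emph{single} move edge, so in your move-only case you must also collapse chains of consecutive moves into one, which is cost-non-increasing since $\vert w_1\vert + \vert w_2\vert \geq \vert w_1 + w_2\vert$, but this normalization should be stated rather than assumed.
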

It follows that we can decompose an optimal transformation graph $G_{\mathds{S}}(x,y)$ into a sequence of distinct trees $(\mathcal{T}_1,\ldots,\mathcal{T}_t)$.
Each tree $\mathcal{T}_i$ has a set of sink nodes $N_{\mathcal{T}_i}(x)$ and a set of source nodes $N_{\mathcal{T}_i}(y)$.
All nodes of $N_{\mathcal{T}_i}(x)$  and $N_{\mathcal{T}_i}(y)$ are successors of $N_{\mathcal{T}_{i-1}}(x)$ and $N_{\mathcal{T}_{i-1}}(y)$, respectively.
We call a tree \emph{monotonic} if all paths in the tree are monotonic. 
Further a tree may be specified as \emph{increasing} or \emph{decreasing}. 
Two trees are \emph{equivalent} if they have the same set of source and sink nodes. 
The \emph{cost of a tree $\mathcal{T}$} is the sum of the cost of all edges in the tree.\\

In the following, we denote an optimal transformation graph fulfilling these decomposition properties as an \emph{optimal transformation forest}.

\section{\textsc{cMSM}}
\label{app:cMSM}
The exact algorithm \textsc{cMSM} computes the distance between a constant time series, i.e., a time series which data points are equal to a constant $q\in \mathds{R}$, and an arbitrary time series $x$ of same length. 
Let $q^{(n)}=(q_1,\ldots, q_n)$ be a constant time series of length $n$  such that it holds for a constant $q$ that $q_i = q$ $\forall i\in [n]$. 
To compute the distance $d(x,q)$, we decompose the transformation forest $G_{\mathds{S}}(x,q^{(n)})$ into a sequence of distinct trees $(\mathcal{T}_1,\ldots,\mathcal{T}_s)$. 
To this end, we first determine an upper threshold $t^+$ such that $t^+>q$ and a lower threshold $t^-<q$. 
The idea is, that all points of $x$ that are above or below the upper or the lower threshold, respectively, are mostly contained in Type-4-Trees. 
All other points are in Type-1-Trees. 
The next lemma formalizes this decomposition. 

\begin{lemma}
\label{lemma:constantTsDecomposition}
Given a time series $x=(x_1,\ldots,x_n)$, two constants $c$ and $q$, and a constant time series $q^{(n)}$, $q_i=q$  $\forall i\in[n]$.
Then there exists an optimal transformation where each tree $\mathcal{T}$ is
\begin{enumerate}
\item a Type-4-Tree and $\forall u_j \in \mathcal{N}_{\mathcal{T}_i}(x): x_j\leq t^-$ or $\forall u_j \in \mathcal{N}_{\mathcal{T}_i}(x): x_j\geq t^+$ and $\vert \mathcal{N}_{\mathcal{T}_i}(x)\vert>1$,
\item or a Type-1-Tree
\end{enumerate}
for $t^+ = q+2c$ and $t^-=q-2c$. 
\end{lemma}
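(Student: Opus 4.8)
The plan is to begin from an optimal transformation forest that exists by the Monotonicity lemma (Lemma~\ref{lemma:monotonicity}) and the decomposition lemma cited above, so that every weakly connected component is a tree of Type~1--4, and then to apply a sequence of cost-non-increasing local surgeries until only the two claimed component types remain. Throughout I would exploit the defining feature of the target: since all sink values equal $q$, only the \emph{number} of $q$-points assigned to each tree matters, not their positions, so sinks may be freely redistributed among trees as long as the total stays $n$ and monotonicity is respected. I would also record the global balance constraint that, because $x$ and $q^{(n)}$ both have length $n$, the total number of merge edges equals the total number of split edges in any transformation.

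\textbf{Step 1 (balancing).} First I would show that one may assume every tree has equally many source ($x$) and sink ($q$) nodes, which immediately rules out standalone Type-2-Trees (one source, several sinks) and Type-3-Trees (several sources, one sink). A tree that produces more copies of $q$ than it consumes $x$-points uses a ``redundant'' split at value $q$, while by the balance constraint some other tree must merge away an $x$-point; using the interchangeability of sinks I would pair such an over-/under-producing tree with a neighbour and rewrite them into balanced pieces, turning the merged-away point into its own move and deleting the redundant split, which never increases the cost.

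\textbf{Step 2 (cutting).} For a balanced tree spanning a contiguous block $x_a,\ldots,x_b$ the crux is a two-point exchange. For an adjacent pair $x_i\le x_{i+1}$ with both values above $q$, bundling them into a two-source two-sink Type-4-Tree funnels them through one bottleneck and costs $x_{i+1}-q+2c$, whereas handling them as two separate moves costs $(x_i-q)+(x_{i+1}-q)$; the difference is $2c-(x_i-q)$, so the bundle is strictly cheaper exactly when $x_i>q+2c=t^+$, and the separation (a ``cut'') is non-increasing as soon as $x_i\le t^+$. The symmetric computation gives $t^-=q-2c$ on the low side, and an analogous estimate shows that cutting between two points lying on opposite sides of $q$ is likewise non-increasing. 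Iterating these cuts --- always peeling off the point closest to $q$, or a wrong-side point --- from a multi-point tree leaves only multi-point trees all of whose sources lie on a single side of $q$ and beyond the corresponding threshold, together with one Type-1-Tree per peeled point. This yields precisely case~(1), i.e.\ Type-4-Trees with $\lvert\mathcal{N}_{\mathcal{T}_i}(x)\rvert>1$ and all sources $\ge t^+$ or all $\le t^-$, and case~(2), the Type-1-Trees, with the constants $t^+=q+2c$ and $t^-=q-2c$ emerging from the crossover.

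\textbf{Main obstacle.} The hardest part is the global coupling in Step~1: because the trees jointly partition the $n$ sinks, a local change to one component alters how many sinks remain for its neighbours, so each surgery must be designed to preserve the overall counts. Making the rebalancing rigorous --- formally justifying that an unbalanced tree can always be exchanged against a neighbour without raising the cost, using that all sinks are identical --- is where the real work lies. A secondary subtlety is the boundary behaviour at the two ends of the sequence and the tie-breaking at $\lvert x_i-q\rvert=2c$, which fixes whether the threshold inequalities in the statement are strict or weak.
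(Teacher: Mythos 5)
Your proposal is correct and takes essentially the same route as the paper: the paper also starts from an optimal forest of Type 1--4 trees, assumes each tree is source/sink balanced (dispatching your Step 1 in a single sentence, on exactly your grounds that splits can be shifted between trees at no cost because all sinks equal $q$), and obtains the thresholds $t^{\pm}=q\pm 2c$ from the same cut-versus-bundle cost comparison --- the paper's cut changes the cost by $x_{j+1}-q-2c$, which is your $2c-(x_i-q)$ with the sign reversed. The only substantive difference is that the paper additionally proves a converse (consecutive points beyond the threshold must lie in a single tree, via comparing $\cost(\mathcal{T}_1)+\cost(\mathcal{T}_2)$ to $\cost(\mathcal{T})$), which the lemma statement itself does not require but the subsequent dynamic program does.
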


\begin{proof}
Let $G_{\mathds{S}}(x,q^{(n)})$ be an optimal transformation forest. 
Before starting the proof, we make some general observations regarding the alignment from $x$ to $q^{(n)}$. 
By Lemma \ref{lemma:monotonicity}, a merge of two consecutive points $x_i$ and $x_{i+1}$ is not optimal if $x_i<q$ and $x_{i+1}>q$ or $x_i>q$ and $x_{i+1}<q$. 
Since the time series are of equal length, the number of merge operation is equal to the number of split operations in $G$
 We assume that the number of source nodes is equal to the number of sink nodes in a tree $\mathcal{T}$. If this is not the case, split operations can be easily shifted to another tree without changing the cost. 
 
The cost of a Type-4-Tree $\mathcal{T}$ are at least the sum of the maximum distance from $q^{(n)}$ to the source node set of $\mathcal{T}$ and the respective merge and split cost. Formally, we get $\cost(\mathcal{T})\leq \vert max_{x_i\in N_{\mathcal{T}}(x)}(x_i) - q\vert + \vert N_{\mathcal{T}}(x)\vert c + \vert N_{\mathcal{T}}(q)\vert c $.\\
In the first part of the proof, we show that it is not optimal to merge two consecutive points where one point is above a threshold and the other one is below.

Without loss of generality we regard the upper threshold $t^+$ and a decreasing tree $\mathcal{T}$ with source node set $ N_{\mathcal{T}}(x) = (u_i,\ldots, u_j, u_{j+1}, \ldots u_k)$. Assume towards a contradiction that $\forall x \in (x_i,\ldots,x_j): x> t^+$ and $x_{j+1}<t^+$.
The nodes $(u_i,\ldots, u_j)$ merge to the intermediate node $\delta$. By Lemma XY citePaper, it holds for decreasing trees that $\delta = \min(x_i,\ldots,x_j) > x_{j+1}$. 
Let $\mathcal{T}_1$ and $\mathcal{T}_2$ be two decreasing trees such that $ N_{\mathcal{T}_1}(x) = (u_i,\ldots, u_j)$ and $ N_{\mathcal{T}_2}(x) = (u_{j+1}, \ldots u_k)$. 
Splitting the tree $\mathcal{T}$ into $\mathcal{T}_1$ and $\mathcal{T}_2$ changes the cost of $G$. The move cost from $\delta$ to $x_{j+1}$ and the merge and split cost for $x_{j+1}$ are subtracted. Additionally, the move cost from $\delta$ to $q$ are added. In total the cost of $\mathcal{T}_1$ and $\mathcal{T}_2$ are $\cost(\mathcal{T}_1) +\cost(\mathcal{T}_2)= \cost(\mathcal{T}) - (\delta - x_{j+1}) - 2c + (\delta - q)$. 
It follows that $\cost(\mathcal{T}_1) +\cost(\mathcal{T}_2) < \cost(\mathcal{T})$ if $x{j+1} > q +2c$. 
Setting $t^+ = 2c + q$ we get a contradiction to our assumption that $G$ is optimal.

Second, we prove that a sequence consecutive of points larger than a threshold are in one tree. 
Without loss of generality, let $x' = (x_i,\ldots,x_j,x_{j+1},\ldots,x_k)$ be a sequence of consecutive points such that for all $x~\in~x': x\geq t^+$. 
Assume towards a contradiction that $\mathcal{T}_1$ and $\mathcal{T}_2$ are two decreasing trees with source nodes  $N_{\mathcal{T}_1}(x) = (u_i,\ldots,u_j)$ and $N_{\mathcal{T}_2}(x) = (u_{j+1},\ldots,u_k)$. 
The cost for both trees are $\cost(\mathcal{T}_1) + \cost(\mathcal{T}_2) \geq (k-i-2)2c + \max(N_{\mathcal{T}_1}(x)) + \max(N_{\mathcal{T}_2}(x)) - 2q$.
Let $\mathcal{T}$ be a decreasing tree such that $N_{\mathcal{T}}(x) = (x_i,\ldots,x_k)$ with $\cost(\mathcal{T}) = (k-i-1)2c + \max(N_{\mathcal{T}_1}(x))-q$. 
If $2c< min(\max(N_{\mathcal{T}_1}(x)), \max(N_{\mathcal{T}_2}(x)))$ the decomposition into $\mathcal{T}_1$ and $\mathcal{T}_2$ is not optimal. Setting $t^+ = 2c + q$ we get a contradiction to this assumption. 
\end{proof}

\begin{figure}
    \centering
    \includegraphics[width=0.45\textwidth]{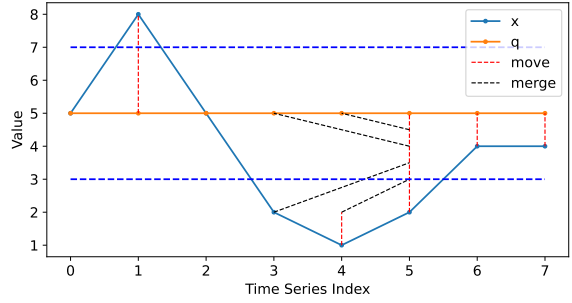}
    \caption{Alignment of $x=(5,8,5,2,1,2,4,4)$ to a constant time series with $q=5$. All merge operations are marked in red, merges and splits are marked in black. The two dotted lines refer to $c$ with $c=1$. The resulting dynamic programming table is $D_c=[13,13,10,10,8,5,2,1]$ with $d(x,q^{(8)})=13$.}
    \label{fig:constantDist2}
\end{figure}
In the following we give the computation rules of the dynamic program computing $d(x,q^{(n)})$. 
Lemma \ref{lemma:constantTsDecomposition} shows that we have to consider the cost of Type-1- and Type-4-Trees depending on the location of the points regarding the given thresholds $t^+ = q+2c$ and $t^- = q-2c$. Figure \ref{fig:constantDist2} depicts the logic of the cost computation. 

The base case is always a move operation, i.e., $D_c[n] = \vert x_n -q \vert$. 
For all other entries, we check if they are in a Type-1- or a Type-4-Tree. 
We can compute the cost per point. The cost for the first point in a Type-4-Tree are just the move costs. All further points in the same tree have cost that includes the constant cost $c$ for the merge and split operations and potentially some rest move cost to merge with another point. Formally, if $\vert x_n-q \vert \geq 2c$ and $\vert x_{n+1}-q \vert \geq 2c$ then
$$
D_c[i] = D_c[i+1] + \max(0, \vert x_{n+1} -x_n \vert),
$$
otherwise $D_c[i] = D_c[i+1] + \vert x_n - q \vert $.

\end{document}